\documentclass[prl,aps,superscriptaddress,nofootinbib,twocolumn]{revtex4-2}
\newcommand{\abs}[1]{\vert #1 \vert}

\usepackage{amsthm,amsmath,amssymb, physics, mathtools}
\usepackage{mathrsfs}

\usepackage{amsthm}
\newtheorem{theorem}{Theorem}

\theoremstyle{definition}

\theoremstyle{remark}

\usepackage{comment}
\usepackage{amssymb}
\usepackage{graphicx}
\usepackage{epstopdf}

\usepackage{algorithm}
\usepackage[noend]{algpseudocode}

\usepackage{natbib}

\usepackage[colorlinks=true,citecolor=blue,linkcolor=magenta]{hyperref}

\usepackage{xcolor}

\usepackage{comment}

\usepackage[normalem]{ulem}
\usepackage{xcolor}
\newcommand\redsout{\bgroup\markoverwith{\textcolor{red}{\rule[0.5ex]{2pt}{0.4pt}}}\ULon}

\newcommand{\ph}{p^{\mathrm{h}}}
\newcommand{\pc}{p^{\mathrm{c}}}
\newcommand{\pth}{\tilde{p}^{\mathrm{h}}}
\newcommand{\ptc}{\tilde{p}^{\mathrm{c}}}
\newcommand{\psih}{\Psi^{\mathrm{h}}}
\newcommand{\psic}{\Psi^{\mathrm{c}}}
\newcommand{\psith}{\tilde{\Psi}^{\mathrm{h}}}

\newcommand{\rh}{r^{\mathrm{h}}}
\newcommand{\sh}{s^{\mathrm{h}}}
\newcommand{\rc}{r^{\mathrm{c}}}

\newcommand{\fn}{f}
\newcommand{\FN}{D_{f}}
\newcommand{\FI}{D_{I}}
\newcommand{\FNh}[1]{\FN^{\mathrm{h}}(#1)}
\newcommand{\FNc}[1]{\FN^{\mathrm{c}}(#1)}
\newcommand{\FNx}[1]{\FN^{\mathrm{x}}(#1)}

\newcommand{\FIh}[1]{\FI^{\mathrm{h}}(#1)}
\newcommand{\FIc}[1]{\FI^{\mathrm{c}}(#1)}

\begin{document}

\title{Theory of Quantum Imaginary-Time Mpemba Effect}
\author{Yumeng Zeng}
\thanks{These authors contributed equally to this work.}
\affiliation{Center on Frontiers of Computing Studies, School of Computer Science, Peking University, Beijing 100871, China}

\author{Jeongrak Son}
\thanks{These authors contributed equally to this work.}
\affiliation{Nanyang Quantum Hub, School of Physical and Mathematical Sciences, Nanyang Technological University, 637371, Singapore}

\author{Mile Gu}
\email{mgu@quantumcomplexity.org}
\affiliation{Nanyang Quantum Hub, School of Physical and Mathematical Sciences, Nanyang Technological University, 637371, Singapore}
\affiliation{Center for Quantum Technologies, Nanyang Technological University, 639798, Singapore}
\affiliation{MajuLab, CNRS-UNS-NUS-NTU International Joint Research Unit, UMI 3654, Singapore 117543, Singapore}

\author{Xiao Yuan}
\email{xiaoyuan@pku.edu.cn}
\affiliation{Center on Frontiers of Computing Studies, School of Computer Science, Peking University, Beijing 100871, China}
\date{\today}

\begin{abstract}
Quantum imaginary-time evolution (QITE) is a fundamental framework for preparing ground and thermal states, yet its computational cost scales significantly with the evolution duration $\tau$. Reducing this duration is critical for practical quantum advantage. Here, we establish a unified theoretical framework for the Mpemba effect in QITE—a counterintuitive phenomenon where {a state initially farther from the ground state relaxes to it faster than one initially closer}. We derive a remarkably simple necessary and sufficient condition for the occurrence of this effect, showing it is uniquely determined by the population {ratios of excited states} to the ground state. For practical state preparation, we introduce a rigorous sufficient condition for the finite-time Mpemba effect, ensuring the crossing occurs before reaching a prescribed proximity threshold. Furthermore, we unveil unique {dynamical} features, including a multiple-crossing phenomenon in multi-level systems and simultaneous intersections for collinear initial states. 
Our results provide criteria for identifying favorable initial states in QITE and offer deep insights into the speed limit of quantum state preparation.
\end{abstract}

\maketitle

The dynamics of quantum systems can be studied not only in real time, but also in imaginary time. Under the Wick rotation $t\to -i\tau$, an initial state $\ket{\Psi_0}$ evolves along the normalized trajectory $\ket{\Psi(\tau)} \propto e^{-H\tau}\ket{\Psi_0}$, progressively suppressing excited-state components in favor of the ground state. This process, known as quantum imaginary-time evolution (QITE), has become a powerful framework for ground-state preparation, open-system dynamics simulation, and finite-temperature calculations, with applications ranging from fundamental many-body physics to practical optimization problems~\cite{McArdle2019,Yeter2020,Gomes2020,Mao2020,Motta2020,Nishi2021, Sun2021, Kamakari2022, Gluza2025}.   

Unlike real-time quantum simulation, for which efficient and in some settings optimal algorithms are
known for broad classes of Hamiltonians~\cite{Low2017, Low2019}, no comparably general efficient algorithm is known for QITE. This difficulty reflects the QMA-hardness of ground-state preparation~\cite{Kempe2006}. Despite the diversity of existing QITE methods (e.g., variational approaches~\cite{McArdle2019, Yeter2020, Gomes2020}, unitary dilation and approximation schemes~\cite{Motta2020, Nishi2021, Sun2021}, weak-measurement-based protocols~\cite{Mao2020}, quantum signal processing techniques~\cite{Silva2023, Low2025, Suzuki2025QSP}, and recursive algorithms~\cite{Gluza2025, Son2025}), the cost of simulation generally grows rapidly with the target imaginary time. In particular, for recent general-purpose algorithms that deterministically approximate QITE for arbitrary duration~\cite{Gluza2025}, the cost scales exponentially with the target QITE duration, and this scaling is known to be optimal for certain Hamiltonian families~\cite{Suzuki2025}. Consequently, even a modest reduction in the required QITE duration can yield dramatic, and sometimes exponential, savings in computational cost. This raises a natural practical question: can one choose or engineer initial states that reach the target accuracy in a shorter imaginary time?

Naively, one might expect that the farther a state is from the ground state, the longer it should take to reach it under QITE, much as one might expect hotter water to take longer to freeze than colder water. The Mpemba effect, however, illustrates striking exceptions to this heuristic~\cite{Mpemba1969}. First popularized in macroscopic settings~\cite{Kell1969,Auerbach1998,Jeng2006,Katz2009,Vynnycky2015,Lu2017,Lasanta2017,Kumar2020}, this phenomenon has more recently attracted growing interest in open quantum systems~\cite{Carollo2021,Chatterjee2023,Chatterjee2024,Moroder2024,Wang2024,Nava2024,Shapira2024,David2025,Tan2025,Zhang2025,bao2025accelerating} {and real-time evolution in closed quantum system}~\cite{Rylands2024,Joshi2024,Murciano2024,Turkeshi2025,Ares2025,summer2026}. Meanwhile, analogous behavior has recently been observed \emph{numerically} for pure states undergoing QITE~\cite{Chang2024,Yu2025,Zander2026}. Leveraging such effects in practice, however, requires a theoretical understanding of why and when they occur, how they depend on the initial state, and when they remain operationally relevant.

Here, we develop a unified {theoretical} framework for Mpemba effects in QITE using a broad class of distance-like functions. Our results provide principled criteria for understanding when seemingly less favorable initial states can reach a target ground-state proximity sooner, thereby helping identify candidate initializations for faster QITE. We first derive a simple necessary-and-sufficient condition for the Mpemba effect in QITE, showing that it is completely determined by the population ratios of excited states to the ground state. We then consider the practically relevant setting of approximate state preparation, where evolution is halted once the state reaches a prescribed target accuracy, and establish a rigorous sufficient condition for observing a finite-time Mpemba effect. Finally, we show that initial states whose population vectors are collinear {all intersect simultaneously}, and identify a multiple-crossing phenomenon in multi-level systems through numerical simulations.

\vspace{0.1cm}
\noindent \emph{Framework.---} Suppose a Hamiltonian $H$ has $n$ energy levels with eigenstates $\{\ket{E_{0}}, \ket{E_{1}}, \dots, \ket{E_{n-1}}\}$ and corresponding eigenenergies $\{E_{0}, E_{1}, E_{2}, \dots, E_{n-1}\}$.
Assume that all eigenenergies are nondegenerate and strictly ordered ($E_{i}< E_{j}$ for $i< j$), with the ground-state energy set to $E_{0}=0$; see SI for justification. 
QITE of the initial state $\ket{\Psi_{0}}$ is then  given by  
\begin{equation}\label{eq: QITE def}
\left|\Psi(\tau)\right\rangle \coloneqq \frac{e^{-H\tau}\ket{\Psi_{0}}}{\sqrt{\bra{\Psi_{0}}e^{-2H\tau}\ket{\Psi_{0}}}},
\end{equation}
where the real number $\tau$ is the QITE duration.
QITE converges to the ground state by exponentially suppressing all the excited state components, whenever the initial state has a non-zero overlap with the ground state, i.e., $\braket*{\Psi_{0}}{E_{0}}\neq0$. 
An equivalent alternative formulation reveals that QITE in fact yields an optimal trajectory for this task.
It is known that QITE in Eq.~\eqref{eq: QITE def} solves the differential equation $\partial_{\tau} \Psi(\tau) = [[\Psi(\tau), H], \Psi(\tau)]$ when $\Psi(\tau) = \dyad{\Psi(\tau)}$ is a pure density matrix~\cite{Anshu2021}.
This differential equation, named double-bracket flow~\cite{Brockett1991}, has been identified as a Riemannian gradient flow~\cite{Hlemke1994}, where the cost function is the average energy $\bar{E}(\tau) \coloneqq \expval*{H}{\Psi(\tau)}$.
In other words, $\partial_{\tau}\Psi(\tau) = -\mathrm{grad}\bar{E}(\tau)$, and QITE can be regarded as a Riemannian optimization of the average energy reduction~\cite{Gluza2025, Suzuki2025, Mcmahon2025, Zander2026}.
This realization positions QITE as the best \emph{computational} cooling process for quantum states, albeit it is not physical.

Our goal is to compare the cooling performance of two different initial states undergoing QITE. 
To define cooling performance rigorously, one must choose a distance-like function that measures proximity to the {ground} state.
There are many different choices such as the {ground-state} infidelity and the average energy. 
To unify these choices, we define a general distance-like function based on a few minimal assumptions. First, we assume that it is given by an observable $\hat{f}(H) = \sum_{i=0}^{n-1} \fn(E_{i})\dyad{E_{i}}$, where $\fn(E)$ is a nondecreasing function of $E$. Second, we assume {$\fn(E_{1})>\fn(E_0)=0$} to make the distance zero if and only if the state reaches the ground state.
The general distance to the ground state is then given as
$\FN(\tau)\coloneqq \expval{\hat{f}(H)}{\Psi(\tau)} = \sum_{i=0}^{n-1}p_{i}(\tau)\fn(E_{i})$.
Here $p_{i}(\tau) \coloneqq \abs*{\braket*{E_{i}}{\Psi(\tau)}}^{2}$ are the energy‑level populations of the state $\ket{\Psi(\tau)}$.
Two canonical measures, namely infidelity and average energy, can be recovered by choosing appropriate functions $\fn$: 
infidelity by taking $\fn(0) = 0$ and $\fn(E) = 1$ for all $E>0$; average energy by taking $\fn(E) = E$.
Moreover, it is guaranteed that QITE of any positive duration will strictly decrease the distance, i.e. $\FN(\tau)>\FN(\tau')$ if $\tau'>\tau$.

For a general distance {function}, suppose that $\ket*{\psih_{0}}$ is hotter than $\ket*{\psic_{0}}$, that is $\FNh{0}>\FNc{0}$, the Mpemba effect happens for these states if there exists $\tau^{\star}>0$ such that $\FNh{\tau}<\FNc{\tau}$ for all $\tau>\tau^{\star}$.
Our definition mirrors the definition of the conventional Mpemba effects in physical cooling processes using distance functions between mixed states~\cite{Lu2017}.

\begin{figure*}
	\begin{centering}
		\includegraphics[scale=0.435]{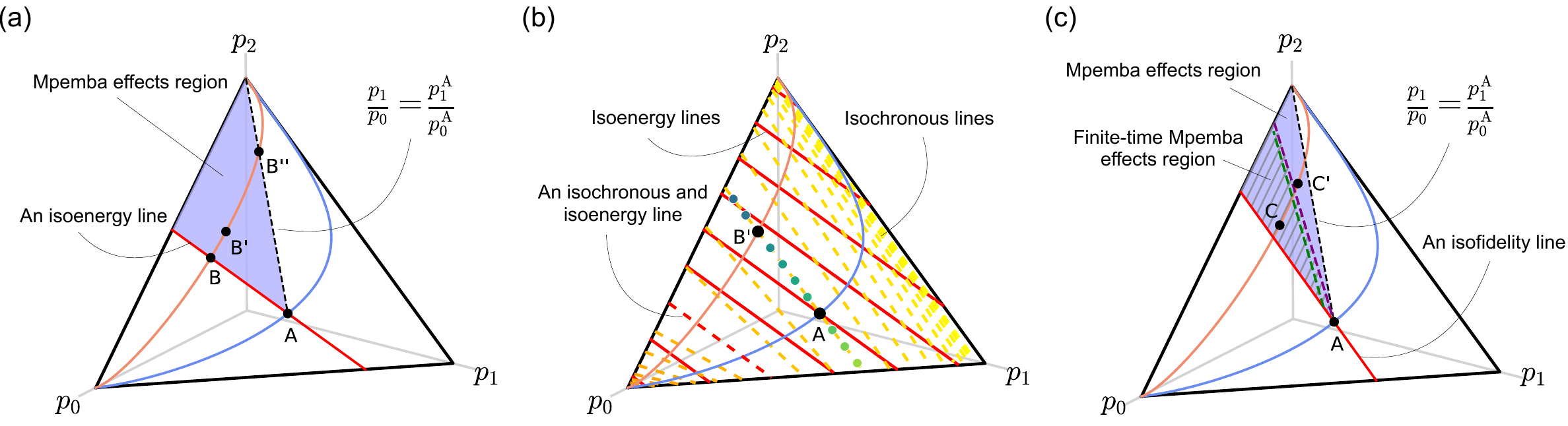}
	\par\end{centering}
		\caption{
		The population simplex of $(p_{0},p_{1},p_{2})$ for a three-level system, depicted as an equilateral triangle with black edges.
		The blue and peach curves are QITE trajectories. 
		Red lines are contours of constant distance to the ground state for a given Hamiltonian, measured by average energy in (a) and (b) and by infidelity in (c).
		The purple regions denote sets of initially hotter states that eventually become colder than the given initially colder state $A$ under QITE.
		\textbf{(a)} 
		Points $B$, $B'$, and $B''$ lie on the same QITE trajectory, with $B''$ located exactly on the boundary of the purple region.
		\textbf{(b)} 
		The set of initial states are marked by collinear green dots, where darker shades indicate higher average energy than lighter ones.
		The dashed lines represent isochronous lines, with colors shifting toward orange as $\tau$ increase and toward yellow as $\tau$ decrease. 
		The red dashed line highlights the coincidence of an isoenergy line with an isochronous line.
		\textbf{(c)} 
		The purple region with diagonal stripes is the set of all initially hotter states that cool faster than the given initially colder state $A$ with a threshold $\epsilon=0.05$. 
		The purple dashed line marks the numerically determined boundary for the finite-time Mpemba effect, while the green dashed line denotes the boundary derived from the sufficient condition Eq.~\eqref{eq: finite time Mpemba sufficient} of Theorem~\ref{thm: finite time sufficient}. 
		Points $C$ and $C'$ lie on the same QITE trajectory.
		}
	\label{Figure1}
\end{figure*}

\vspace{0.1cm}
\noindent \emph{Main results.---}
Our main {result provides a simple necessary and sufficient condition for the Mpemba effect with respect to any $\FN$ measure, thereby unifying all such measures under a single criterion. This can be viewed as the closed-system analogue of the unification of $f$-divergence Mpemba effects in Ref.~\cite{Tan2025}, with the key distinction that in our case the same necessary and sufficient condition applies to all \(\FN\) measures, unlike in Ref.~\cite{Tan2025}.}

\begin{theorem}\label{thm: Mpemba iff}
	Suppose that $\ket*{\psih_{0}}$ is ``hotter'' than $\ket*{\psic_{0}}$ with respect to the $\FN$, i.e. $\FNh{0} > \FNc{0}$.
	Let $i$ be the smallest index such that $\frac{\ph_{i}(0)}{\ph_{0}(0)} \neq \frac{\pc_{i}(0)}{\pc_{0}(0)}$.
	Then the Mpemba effect occurs if and only if $\frac{\ph_{i}(0)}{\ph_{0}(0)} < \frac{\pc_{i}(0)}{\pc_{0}(0)}$.
\end{theorem}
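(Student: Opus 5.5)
The plan is to write the distance explicitly in terms of the population ratios and then compare the two trajectories at large $\tau$. Under QITE the populations evolve as $p_j(\tau) = p_j(0)e^{-2E_j\tau}/Z(\tau)$ with $Z(\tau)=\sum_k p_k(0)e^{-2E_k\tau}$. We assume $p_0(0)>0$ for both states, since otherwise QITE does not converge and the ratios are undefined. Setting $r_j \coloneqq p_j(0)/p_0(0)$ and using $E_0=0$, $\fn(E_0)=0$, we get
\begin{equation*}
\FN(\tau) = \frac{\sum_{j\ge 1} r_j \fn(E_j) e^{-2E_j\tau}}{1+\sum_{j\ge1} r_j e^{-2E_j\tau}}.
\end{equation*}
Both the numerator $N(\tau)$ and the denominator $1+S(\tau)$ depend only on the ratios $r_j$. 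This already suggests why the criterion involves only the ratios.

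Next we compare the two states through the sign of the cross-multiplied difference
\begin{equation*}
\Delta(\tau) \coloneqq N^{\mathrm h}(\tau)\bigl(1+S^{\mathrm c}(\tau)\bigr) - N^{\mathrm c}(\tau)\bigl(1+S^{\mathrm h}(\tau)\bigr),
\end{equation*}
which has the same sign as $\FNh{\tau}-\FNc{\tau}$ because the denominators are positive. Let $i$ be the first index with $\rh_i\neq \rc_i$; note $i\ge1$ since $r_0=1$ for both states. Expanding gives
\begin{equation*}
\Delta(\tau) = \bigl(N^{\mathrm h}-N^{\mathrm c}\bigr) + \bigl(N^{\mathrm h}S^{\mathrm c}-N^{\mathrm c}S^{\mathrm h}\bigr).
\end{equation*}
In the first bracket, the terms with $j<i$ cancel, so it equals $(\rh_i-\rc_i)\fn(E_i)e^{-2E_i\tau}+O(e^{-2E_{i+1}\tau})$. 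The second bracket is a sum of products $e^{-2(E_j+E_k)\tau}$ with $j,k\ge1$. Any product with $j,k<i$ cancels by symmetry, because the coefficients coincide there. Every surviving product has $\max(j,k)\ge i$ and both indices at least 1, so it decays at least as fast as $e^{-2(E_1+E_i)\tau}$, which is strictly faster than $e^{-2E_i\tau}$. Hence
\begin{equation*}
\Delta(\tau)=(\rh_i-\rc_i)\fn(E_i)e^{-2E_i\tau}\bigl(1+o(1)\bigr).
\end{equation*}

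The main obstacle is that $\fn(E_i)$ might vanish: $\fn$ is only nondecreasing with $\fn(E_1)>0$, and $\fn(E_i)\ge \fn(E_1)>0$ for $i\ge1$. So the leading coefficient is nonzero, and for all sufficiently large $\tau$ the sign of $\FNh{\tau}-\FNc{\tau}$ equals the sign of $\rh_i-\rc_i$. If $\rh_i<\rc_i$, the hotter state is eventually strictly colder, and a finite $\tau^\star$ exists, which is exactly the Mpemba effect. If $\rh_i>\rc_i$, then $\FNh{\tau}>\FNc{\tau}$ for all large $\tau$, so no $\tau^\star$ can exist.

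One edge case remains: all ratios could coincide. Then the two population vectors are equal, which contradicts $\FNh{0}>\FNc{0}$. So the index $i$ always exists.
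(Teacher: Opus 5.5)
Your argument is correct, and it reaches the conclusion by a somewhat different mechanism than the paper's. The paper explicitly treats only the case where the first ratios differ and are nonzero. It writes $\FN^{\mathrm{x}}(\tau)=R^{\mathrm{x}}(\tau)e^{-2E_{1}\tau}+\mu^{\mathrm{x}}(\tau)e^{-2E_{2}\tau}$ and uses $\varepsilon$-sandwich bounds to show that $\FNh{\tau}/\FNc{\tau}\to\rh/\rc$, which fixes the eventual ordering. It calls the case $\rh=\rc$ straightforward, and relegates $\rh=0$ or $\rc=0$ to the SI, where the two distances are compared through their different decay exponents.

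You instead work with the cross-multiplied difference. You show it equals $(r^{\mathrm{h}}_{i}-r^{\mathrm{c}}_{i})\fn(E_{i})e^{-2E_{i}\tau}(1+o(1))$ for the first differing index $i$. This uses two facts: every term built only from indices below $i$ cancels exactly, and the surviving cross terms are bounded by $e^{-2(E_{1}+E_{i})\tau}$.

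What this buys you is uniformity. Since you never divide by a ratio, vanishing populations need no separate treatment, and a general index $i$ is handled directly. That matters: when the first ratios coincide and are nonzero, the paper's limit $\rh/\rc$ equals $1$ and decides nothing at leading order. One must then go to the next order, which is exactly what your cancellation argument does.

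What the paper's ratio route offers in return is the explicit asymptotic proportionality constant $\rh/\rc$ between the two distances. It also gives the accompanying picture of the long-time dynamics as that of an effective two-level system, which the paper leans on in its discussion.
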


Theorem~\ref{thm: Mpemba iff} is proved in End Matter. {A protocol for redistributing the populations of an original initial state to obtain a state with higher $p_{1}/p_{0}$, serving as the initially hotter state, is present in {SI}}.
To visualise the Mpemba effect, we consider a geometric picture.
Any initial population vector $\vec{p}(0) = (p_{0}(0),p_{1}(0),\cdots,p_{n-1}(0))$ can be mapped onto an $(n-1)$-simplex.
Its QITE trajectory $\vec{p}(\tau)$ for $\tau\in[-\infty,\infty]$ driven by a chosen Hamiltonian is a line in this simplex whose endpoints at $\tau = \pm\infty$ are the vertices of the simplex corresponding to the lowest and the highest populated energy levels of the initial state $\vec{p}(0)$. 
Each trajectory thus represents a continuous one‑parameter family of states, with any two points on the same line connected by forward or backward QITE.
Moreover, distinct trajectories never intersect in the interior of the simplex, 
 so the simplex is partitioned into a cluster of disjoint lines.

Meanwhile, the $(n-1)$-simplex can be foliated into $(n-2)$-surfaces, each of which consists of states that share the same distance to the ground state.
These are, for example, isoenergy surfaces (when distance is measured by average energy) or isofidelity surfaces (when infidelity is used); see the red lines in Fig.~\ref{Figure1}.
Using Theorem~\ref{thm: Mpemba iff}, we can also visualise the set of initially hotter states that would eventually become colder than a fixed initially colder state after QITE.
Fig.~\ref{Figure1} (a) and (c) show these regions (indicated in purple) with respect to the initially colder state $A$. 
The Mpemba effect arises because each state approaches the ground state with different speed under QITE: higher-energy components are suppressed exponentially faster than lower-energy ones.
Consequently, even when two states start with the same distance to the ground state, after an QITE duration their distances differ.
This discrepancy is displayed in Fig.~\ref{Figure1} (b) as the mismatch between isoenergy lines and isochronous lines (viz.~the lines connecting $\{\vec{p}^{(l)}(\tau)\}_{l}$ starting from a set of initial states $\{\vec{p}^{(l)}(0)\}_{l}$ for equal $\tau$), both of which are plotted at equal intervals of energy and imaginary time, respectively. 
If all initial states lie on a straight line, then the shape of the isochronous lines of these points will not change under QITE. 
Instead, they rotate on the {population} simplex because of distinct evolution directions of each state and thus coincide with an isoenergy line at a certain imaginary time, {indicating that all the initial states attain the same average energy simultaneously (shown as} the red dashed line in Fig.~\ref{Figure1}(b)); see {SI} for a detailed proof.

\vspace{0.1cm}
\noindent \emph{Finite-time Mpemba effects.---}
Theorem~\ref{thm: Mpemba iff} demonstrates that the Mpemba effect in QITE is completely determined by the {population ratio of the lowest excited state with a different ratio to} the ground state.
This is because higher-level populations decay exponentially faster under QITE and become negligible after sufficiently long durations. 
However, this is an inherently long‑time statement about QITE.
To understand short‑ and intermediate‑time crossings one must analyse the finite‑time changes in the distance function $\FN$.

For practical purposes, it suffices to prepare a state that is close enough to the desired state.
In the context of QITE, this approximate nature is inevitable as the exact ground state preparation generally requires infinite QITE duration. 
Let $\epsilon$ be the threshold for the tolerable error for the ground state preparation. 
This means that QITE is terminated once the state $\ket{\Psi(\tau)}$ reaches $\FN(\tau) = \epsilon$.
Then, even when Theorem~\ref{thm: Mpemba iff} guarantees that the Mpemba effect will eventually occur if QITE continues indefinitely, QITE may halt before the Mpemba effect takes place.

To address this, we study {the sufficient condition for the finite-time} Mpemba effect. 
Given a hotter state $\ket*{\psih_{0}}$ and a colder state $\ket*{\psic_{0}}$ such that $\FNh{0}>\FNc{0}$, the Mpemba effect occurs before the threshold if there exists $\tau^{\star}$ such that 
$\FNh{\tau^{\star}}=\FNc{\tau^{\star}}>\epsilon$ and $\FNh{\tau}<\FNc{\tau}$ for all $\tau>\tau^{\star}$. 
{This sufficient condition depends on the choice of the distance function.}
In particular, we focus on the case where infidelity to the ground state $1-|\braket*{E_{0}}{\Psi(\tau)}|^{2} = \FI(\tau)$ is chosen as the distance {function}, as it is often the most important metric for state preparation tasks.
Besides, the sufficient condition for general $\FN$ measures {with $f(E_{n-1})> f(E_1)$} is presented in SI, which includes the case where average energy is chosen as the distance {function}.

\begin{theorem}\label{thm: finite time sufficient}
{Suppose the ground-state infidelity of $\ket*{\psih_{0}}$ is higher than that of $\ket*{\psic_{0}}$. Define} ratios $\rh \coloneqq \frac{\ph_{1}(0)}{\ph_{0}(0)}$, $\rc \coloneqq \frac{\pc_{1}(0)}{\pc_{0}(0)}$, and $\sh \coloneqq \frac{1-\ph_{0}(0)}{\ph_{0}(0)}$. The Mpemba effect occurs before the threshold $\epsilon$ if  $\rh<\rc$ and
	\begin{align}\label{eq: finite time Mpemba sufficient}
		\epsilon < \left(1 + \frac{1}{\rc}\left(\frac{\sh-\rh}{\rc-\rh}\right)^{\frac{E_{1}}{E_{2}-E_{1}}} \right)^{-1} .
	\end{align}
\end{theorem}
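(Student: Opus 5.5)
The plan is to rewrite the infidelity as an increasing function of the excited-to-ground population ratios. This lets us compare the two trajectories through simple exponential sums, and then bound the hotter state's sum above and the colder state's sum below by two-term expressions.

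\emph{Step 1 (reduction).} From Eq.~\eqref{eq: QITE def} with $E_0=0$, we have $p_i(\tau)\propto p_i(0)e^{-2E_i\tau}$. Hence
\begin{equation*}
\FI(\tau)=\frac{S(\tau)}{1+S(\tau)},\qquad S(\tau)\coloneqq\sum_{i\geq1}\frac{p_i(0)}{p_0(0)}e^{-2E_i\tau}.
\end{equation*}
Since $x\mapsto x/(1+x)$ is strictly increasing, $\FIh{\tau}<\FIc{\tau}$ holds iff $S^{\mathrm h}(\tau)<S^{\mathrm c}(\tau)$. Also $\FI(\tau)>\epsilon$ holds iff $S(\tau)>\epsilon/(1-\epsilon)$. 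Note $S(0)=s$, so the hypothesis that $\ket*{\psih_{0}}$ has higher infidelity gives $\sh>s^{\mathrm c}\geq \rc$. Combined with $\rh<\rc$, this yields $\sh-\rh>\rc-\rh>0$.

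\emph{Step 2 (two-sided bounds and a threshold time).} Since $E_i\geq E_2$ for $i\geq2$ and $\sum_{i\geq2}\ph_i(0)/\ph_0(0)=\sh-\rh$, we get
\begin{equation*}
S^{\mathrm h}(\tau)\leq \rh e^{-2E_1\tau}+(\sh-\rh)e^{-2E_2\tau},\qquad S^{\mathrm c}(\tau)\geq \rc e^{-2E_1\tau}.
\end{equation*}
Therefore $S^{\mathrm h}(\tau)<S^{\mathrm c}(\tau)$ whenever $e^{-2(E_2-E_1)\tau}<(\rc-\rh)/(\sh-\rh)$. This holds for all $\tau>\tau_0$, where
\begin{equation*}
\tau_0\coloneqq\frac{1}{2(E_2-E_1)}\ln\frac{\sh-\rh}{\rc-\rh}.
\end{equation*}
By Step 1 the logarithm is positive, so $\tau_0>0$. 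At $\tau_0$ itself the same bounds give $S^{\mathrm h}(\tau_0)\leq S^{\mathrm c}(\tau_0)$. Moreover,
\begin{equation*}
S^{\mathrm c}(\tau_0)\geq \rc\left(\frac{\rc-\rh}{\sh-\rh}\right)^{\frac{E_1}{E_2-E_1}}.
\end{equation*}
Plugging this into $x/(1+x)$ shows that $\FIc{\tau_0}$ is at least the right-hand side of Eq.~\eqref{eq: finite time Mpemba sufficient}, and hence $\FIc{\tau_0}>\epsilon$.

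\emph{Step 3 (locating the crossing).} Let $\tau^\star\coloneqq\sup\{\tau\geq0:\FIh{\tau}\geq\FIc{\tau}\}$. This set contains $0$ and is bounded by $\tau_0$ by Step 2, so $\tau^\star\leq\tau_0$. The set is closed by continuity, and $\FIh{0}>\FIc{0}$ strictly, so by continuity $\tau^\star>0$. Since the inequality is strict for all $\tau>\tau^\star$, continuity forces $\FIh{\tau^\star}=\FIc{\tau^\star}$, and $\FIh{\tau}<\FIc{\tau}$ for all $\tau>\tau^\star$. Finally, $\FI$ is strictly decreasing in $\tau$ (as stated in the Framework). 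Hence $\FIc{\tau^\star}\geq\FIc{\tau_0}>\epsilon$, which is exactly the finite-time Mpemba effect.

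The main obstacle I expect is Step 2, namely choosing the right splitting: isolating the $E_1$ term exactly and lumping all higher levels into a single $e^{-2E_2\tau}$ term. This choice is what produces the exponent $E_1/(E_2-E_1)$. A minor side issue is the edge case $n=2$, where $E_2$ does not exist. There $\sh=\rh$, and the statement is degenerate or trivially handled by Theorem~\ref{thm: Mpemba iff}, so it should be treated separately.
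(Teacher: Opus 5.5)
Your proof is correct and follows the paper's argument essentially step for step. The paper's worst-case quantities $1/\pth_{0}(\tau)=1+\rh e^{-2E_{1}\tau}+(\sh-\rh)e^{-2E_{2}\tau}$ and $1/\ptc_{0}(\tau)=1+\rc e^{-2E_{1}\tau}$ are exactly your bounds on $S^{\mathrm h}$ and $S^{\mathrm c}$ shifted by one, your $\tau_0$ is its $\tilde{\tau}^{\star}$, and the threshold condition likewise comes from evaluating the colder lower bound at that time. Your supremum argument in Step 3 simply makes rigorous the existence of the crossing $\tau^{\star}\le\tilde{\tau}^{\star}$, which the paper asserts without proof; and the $n=2$ case needs no separate treatment, since your Step 1 already gives $\sh-\rh>0$, which forces $n\geq 3$.
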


It provides an easy-to-check condition for the finite-time Mpemba effect. 
The {upper bound} of the threshold $\epsilon$ 
increases monotonically with $\rc$ while it decreases monotonically with $\rh$, $\sh$, and $\frac{E_{1}}{E_{2}-E_{1}}$.
For a sufficiently small prescribed threshold, the fastest way for a given initial state to relax to the ground state is to convert $p_{i}(0)$ to $0$ for $0<i<n-1$,
which yields a maximum acceleration time governed by $\Delta \tau(\epsilon)\approx -\frac{E_{n-1}-E_{k}}{2E_{k}E_{n-1}}\ln\epsilon+\frac{1}{2E_{k}}\ln\frac{p_{k}^{\text{c}}(0)f(E_{k})}{p_{0}^{\text{c}}(0)}+\frac{1}{2E_{n-1}}\ln\frac{p_{0}^{\text{h}}(0)}{p_{n-1}^{\text{h}}(0)f(E_{n-1})}$.
While we have required $\rh<\rc$ in Theorem~\ref{thm: finite time sufficient} for simplicity, the result can be generalized for the case $\rh = \rc$. 
We refer to End Matter for the proof and detailed discussion.

\begin{figure*}
\begin{centering}
\includegraphics[scale=0.4]{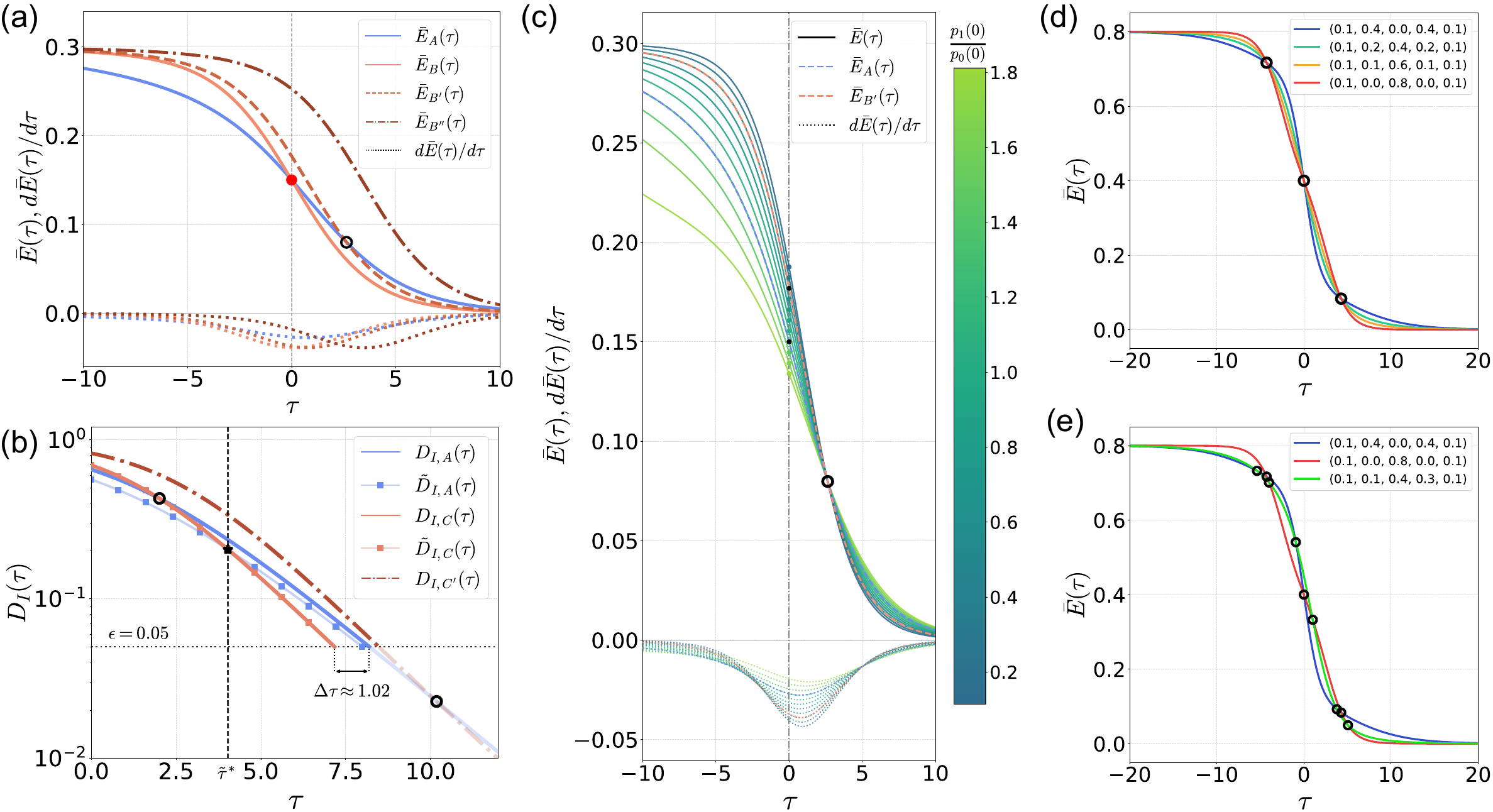}
\par\end{centering}
\caption{The images of evolving curves of distance ((a),(c)-(e): average energy; (b): infidelity) and their derivatives versus imaginary time. (a)-(c): $E_0=0, E_1=0.2, E_2=0.3$. \textbf{(a)} Initial states $A,\ B,\ B',\ B''$ correspond to the four points depicted in Fig.~\ref{Figure1}(a). The red point denotes point $A$ and point $B$ have equal initial average energy. The black circle denotes the curves $\bar{E}_{B'}(\tau)$ and $\bar{E}_{A}(\tau)$ cross at $\tau \approx2.64$ with $\bar{E}\approx0.08$.
\textbf{(b)} Initial states $A,\ C,\ C'$ correspond to the three points depicted in Fig.~\ref{Figure1}(c). The two black circles denote 
the curves $D_{I,C}(\tau)$ and $D_{I,A}(\tau)$ cross at $\tau \approx2.01$ with  $D_I\approx0.43$, and the curves $D_{I,C'}(\tau)$ and $D_{I,A}(\tau)$ cross at $\tau \approx10.18$ with $D_I\approx0.02$. The black star denotes the curves $\tilde{D}_{I,C}(\tau)$ and $\tilde{D}_{I,A}(\tau)$ cross at $\tilde\tau^* \approx4.03$ with $D_I\approx0.20$. 
\textbf{(c)} The color gradient from lighter to darker indicates decreasing values of the ratio $p_1(0)/ p_0(0)$.
The population distributions of states at $\tau=0$ are depicted in Fig.~\ref{Figure1}(b). The black circle shows all curves intersect with each other simultaneously.
(d),(e): The black circles denote crossings of evolving curves in a $5$-level system. The initial population vectors are marked in the legend. Here $E_0=0, E_1=0.15, E_2=0.4, E_3=0.65, E_4=0.8$. \textbf{(d)} All initial states are collinear in the population simplex. \textbf{(e)} Initial states are non-collinear in the population simplex.
\label{Figure2}}
\end{figure*}

A closed-form necessary and sufficient condition for the general case remains elusive. Even for a three-level system, such a condition would require solving equations of the form $Ax^{a}+Bx^{b} = C$.
Nonetheless, the sufficient condition presented in Theorem~\ref{thm: finite time sufficient} is highly robust for two primary reasons. First, it is universal: whenever the Mpemba effect is predicted in the long-time limit (i.e., $\rc > \rh$), the theorem guarantees a non-empty range for the threshold $\epsilon$ within which the effect must occur. Thus, while it may not capture the entire permissible range of $\epsilon$, it ensures the phenomenon is never missed entirely. Second, the condition is tight in the worst-case scenario: it becomes both necessary and sufficient 
when $\ph_{i}(0) = 0$ for all $i>2$ and $\pc_{j}(0) = 0$ for all $j>1$.

Fig.~\ref{Figure1}(c) compares the Mpemba effect in long-time and finite-time regimes. The purple region identifies all initially hotter states that eventually become colder than the given initially colder state A as $\tau \to \infty$. When a practical infidelity threshold of $\epsilon=0.05$ is introduced, a regime where QITE is expected to outperform alternatives like quantum phase estimation~\cite{Gluza2025}, the set of valid states shrinks to the diagonally-striped area.
Furthermore, the sufficient condition in Theorem~\ref{thm: finite time sufficient} proves remarkably tight: the predicted boundary (green dashed line) {is in close proximity to} the numerical finite-time boundary (purple dashed line), even beyond the theoretical worst-case scenario.

\vspace{0.1cm}
\noindent \emph{Numerical results.---}
Now we show numerical results to {demonstrate above} theory. 
In Fig.~\ref{Figure2}(a), the four evolution curves are derived from the four initial states located at the four points $A=(0.35, 0.45, 0.20),\ B=(0.45, 0.15, 0.40),\ B'=(0.36, 0.16, 0.48),$ and $ B''=(0.11, 0.14, 0.75)$  depicted in Fig.~\ref{Figure1}(a). 
Since point $B'$ is situated within the purple region in Fig.~\ref{Figure1}(a), curves $\bar{E}_{B'}(\tau)$ and $\bar{E}_{A}(\tau)$ are guaranteed to cross at a later time, triggering the Mpemba effect. Notably, the curves $\bar{E}_{B'}(\tau)$ and $\bar{E}_{B''}(\tau)$ are essentially right-shifted versions of the curve $\bar{E}_{B}(\tau)$. Since point $B''$ is positioned on the boundary of the purple region, curve $\bar{E}_{B''}(\tau)$ represents the limit of this rightward shift while still ensuring the occurrence of the crossing with curve $\bar{E}_{A}(\tau)$. {Furthermore, the lower panel of Fig.~\ref{Figure2}(a) displays the corresponding time-derivative curves of the average energy, revealing that the decay rate of the average energy is not necessarily correlated with its magnitude.}

Fig.~\ref{Figure2}(b) shows the evolution curves for the three initial states located at the three points $A=(0.35, 0.45, 0.2),\ C=(0.31, 0.16, 0.53),$ and $ C'=(0.18, 0.15, 0.67)$ depicted in Fig.~\ref{Figure1}(c). Since point $C$ lies within the diagonally-striped region, curve $D_{I,C}(\tau)$ necessarily intersects with curve $D_{I,A}(\tau)$ within a sufficiently short time, {successfully} manifesting the finite-time Mpemba effect. {Notably, curve $D_{I,C}(\tau)$ reaches the threshold faster than curve $D_{I,A}(\tau)$ with a time difference $\Delta\tau \approx 1.02$, providing a clear quantitative demonstration of this acceleration phenomenon. In contrast, while} point $C'$ sits within the purple region, it lies outside the diagonally-striped area. {Therefore, its evolution is truncated upon reaching the threshold $\epsilon$ prior to an intersection, rendering the Mpemba effect unobservable in practice, as illustrated by the semi-transparent curves extending below the threshold in Fig.~\ref{Figure2}(b). Furthermore, Fig.~\ref{Figure2}(b)} shows curves $\tilde{D}_{I,A}$ and $\tilde{D}_{I,C}$ defined in the proof of Theorem~\ref{thm: finite time sufficient}. {The observation that their intersection at $\tilde{\tau}^*$ occurs strictly above the threshold firmly validates our proposed sufficient condition for finite-time Mpemba effects.}

To exhibit the change of population distributions of states under QITE, we plot the 
isochronous lines in Fig.~\ref{Figure1}(b). Choosing points situated on a straight line contains both points $A$ and $B'$ as initial states, we plot the corresponding evolution curves of average energies in Fig.~\ref{Figure2}(c). It is evident that all curves intersect with each other at $\tau \approx2.64$ with average energy $\bar{E}\approx0.08$, which is consistent with the red dotted line in Fig.~\ref{Figure1}(b).

For multi($n>3$)-level systems, the imaginary time evolution curves become more diverse. Unlike the 3-level case shown in Fig.~\ref{Figure2}(c), where evolution curves only intersect once, in multi-level systems, such curves may undergo multiple intersections. 
Taking the 5-level system as an example, we plot 
the images of 
curves $\bar{E}(\tau)$ with different initial population vectors in Fig.~\ref{Figure2}(d) and (e). It shows that these curves intersect one another more than once. This indicates that the mere existence of crossings is insufficient to conclude the occurrence of the Mpamba effect, as exemplified by the green and red lines in Fig.~\ref{Figure2}(e). Interestingly, for a fixed initial average energy and $n-3$ {arbitrarily chosen but fixed} initial populations $p_i$, all corresponding QITE curves $\bar{E}(\tau)$ appear to always intersect simultaneously, as shown in Fig.~\ref{Figure2}(d).  This is because the population vectors of these initial states are colinear and preserve their collinearity throughout the evolution. Consequently, all population vectors reach the same isoenergy surface at the same time, causing all curves to intersect simultaneously. If the initial population vectors are not colinear, evolution curves will intersect with each other at different $\tau$, as shown in Fig.~\ref{Figure2}(e).

\vspace{0.1cm}
\noindent \emph{Discussion.---}
In summary, we have established the condition for the occurrence of the Mpemba effects in quantum imaginary-time evolution with infinite and finite times. 
Our result singles out the ratio between the first excited and ground state populations as the main deciding factor of QITE performance.
This implies that the QITE of a generic given state in long-time regime resembles that of an effectively qubit state having only the ground and excited state populations with the same ratio as the original state. 
This simplification would alleviate the difficulty of analysing QITE processes for general initial states and Hamiltonians.
In particular, it has been found that the Grover search problem~\cite{Grover1996} can be regarded as an approximate QITE for such effectively qubit states~\cite{Suzuki2025}.
Combined with our finding, it may be possible to {characterize} the algorithmic complexity of QITE to the well-understood complexity analysis of Grover search problems~\cite{Bennett1997, Zalka1999} {at least for the long-time regime}. 
Ultimately, our results highlight a principled method that leverages the Mpemba effect to optimize initial states, thereby accelerating ground-state preparation. With the rapid progress of quantum computation, the realization of Mpemba effects in imaginary time evolution on quantum hardware holds broad prospects as well as presents exciting opportunities for both fundamental and applied research.

\vspace{0.1cm}
\noindent \emph{{Acknowledgments.}---}
This work is supported by 
{Beijing Natural Science Foundation (Grant No.~1264065 and Z250004),}
the National Natural Science Foundation of China Grant (No.~12361161602), 
NSAF (Grant No.~U2330201), 
Quantum Science and Technology-National Science and Technology Major Project (2023ZD0300200), Beijing Science and Technology Planning Project (Grant No.~Z25110100810000), the High-performance Computing Platform of Peking University, the National Research Foundation through the NRF Investigatorship on Quantum-Enhanced Agents (Grant No. NRF-NRFI09-0010) and the National Quantum Office, hosted in A*STAR, under its Centre for Quantum Technologies Funding Initiative (S24Q2d0009), the Singapore Ministry of Education Tier 1 Grant RT4/23 and RG91/25. JS is supported by the start-up grant of the Nanyang Assistant Professorship at the Nanyang Technological University in Singapore awarded to Nelly Ng.

\section*{End Matter}

\subsection*{Proof of Theorem~\ref{thm: Mpemba iff}}
%%%%%%%%%%%%%%%%%%%%%%%%%%%%%%%%%%%%%%%%%%%%%%%%%%%%%%%%%%%%%%%%%%
\begin{proof}
	Let us denote $\rh = \frac{\ph_{1}(0)}{\ph_{0}(0)}$ and $\rc = \frac{\pc_{1}(0)}{\pc_{0}(0)}$. 
	Furthermore, assume that $\rh\neq\rc$, as it is straightforward to extend our proof for the case $\rh = \rc$.
	We defer the case of $\rh =0$ or $\rc = 0$ to SI.
	Hence the necessary and sufficient condition we want to prove is $\rh<\rc$.
	
	The population vectors after the QITE duration $\tau$ can be calculated directly from Eq.~\eqref{eq: QITE def} as
	\begin{align}
		\ph_{i}(\tau) &= \frac{\ph_{i}(0)e^{-2E_{i}\tau}}{\sum_{j=0}^{n-1}\ph_{j}(0)e^{-2E_{j}\tau}},\label{eq: hot pop i}\\ 
		\pc_{i}(\tau) &= \frac{\pc_{i}(0)e^{-2E_{i}\tau}}{\sum_{j=0}^{n-1}\pc_{j}(0)e^{-2E_{j}\tau}}.\label{eq: cold pop i}
	\end{align}

	\textbf{Proof of necessity:}
	Suppose that the condition in Theorem~\ref{thm: Mpemba iff} is not met, i.e. $\rh > \rc$, while by assumption $\FNh{0}>\FNc{0}$.
	Using Eqs.~\eqref{eq: hot pop i} and~\eqref{eq: cold pop i}, we write
	\begin{align}\label{eq: energ nec ineq 1}
		\FNx{\tau} &= \frac{\sum_{i}p^{\mathrm{x}}_{i}(0)\fn(E_{i})e^{-2E_{i}\tau}}{\sum_{j}p^{\mathrm{x}}_{j}(0)e^{-2E_{i}\tau}}\nonumber\\
		&= \frac{r^{\mathrm{x}}\fn(E_{1})e^{-2E_{1}\tau}+\eta^{\mathrm{x}}(\tau)e^{-2E_{2}\tau}}{1+r^{\mathrm{x}}e^{-2E_{1}\tau} + o^{\mathrm{x}}(\tau)e^{-2E_{2}\tau}},
	\end{align}
	for $\mathrm{x} = \mathrm{h}, \mathrm{c}$, where we define two functions 
	\begin{align}
		o^{\mathrm{x}}(\tau) &\coloneqq  \sum_{i=2}^{n-1}\frac{p^{\mathrm{x}}_{i}(0)}{p^{\mathrm{x}}_{0}(0)}e^{-2(E_{i}-E_{2})\tau},\label{eq: o function def}\\
		\eta^{\mathrm{x}}(\tau) &\coloneqq \sum_{i=2}^{n-1}\frac{p^{\mathrm{x}}_{i}(0)\fn(E_{i})e^{-2(E_{i}-E_{2})\tau}}{p^{\mathrm{x}}_{0}(0)}.\label{eq: eta function def}
	\end{align} 
	Since $E_{i}-E_{2}\geq0$ for all $i$ in the summand of Eqs.~\eqref{eq: o function def} and~\eqref{eq: eta function def}, we find that both $o^{\mathrm{x}}(\tau)$ and $\eta^{\mathrm{x}(\tau)}$ are bounded in finite ranges between two nonnegative numbers for all $\tau\geq0$, and they decrease monotonically with $\tau$.
	Now we rewrite Eq.~\eqref{eq: energ nec ineq 1} as
	\begin{align}
		\FNx{\tau} = R^{\mathrm{x}}(\tau)e^{-2E_{1}\tau} + \mu^{\mathrm{x}}(\tau)e^{-2E_{2}\tau},
	\end{align}
	where
	\begin{align}
		R^{\mathrm{x}}(\tau) &\coloneqq \frac{r^{\mathrm{x}}\fn(E_{1})}{1+r^{\mathrm{x}}e^{-2E_{1}\tau}+ o^{\mathrm{x}}(\tau)e^{-2E_{2}\tau}},\\
		\mu^{\mathrm{x}}(\tau) &\coloneqq \frac{\eta^{\mathrm{x}}(\tau) }{1+r^{\mathrm{x}}e^{-2E_{1}\tau} + o^{\mathrm{x}}(\tau)e^{-2E_{2}\tau}}.
	\end{align}
	Note that $R^{\mathrm{x}}(\tau)$ and $\mu^{\mathrm{x}}(\tau)$ are also bounded in a finite range, i.e. $R^{\mathrm{x}}(\tau)\in[R^{\mathrm{x}}_{\min},R^{\mathrm{x}}_{\max}]$ and $\mu^{\mathrm{x}}(\tau)\in[\mu^{\mathrm{x}}_{\min},\mu^{\mathrm{x}}_{\max}]$ for all $\tau\geq0$, where $R^{\mathrm{x}}_{\min},R^{\mathrm{x}}_{\max},\mu^{\mathrm{x}}_{\min},\mu^{\mathrm{x}}_{\max}$ are finite numbers independent of $\tau$.
	Then, for any $\varepsilon_{1}>0$, there exists $\tau_{1}$ such that 
	\begin{align}\label{eq: energ nec ineq 2}
		R^{\mathrm{x}}(\tau)(1- \varepsilon_{1}) < \FNx{\tau} e^{2E_{1}\tau} < R^{\mathrm{x}}(\tau) (1+ \varepsilon_{1}),
	\end{align}
	for both $\mathrm{x} = \mathrm{h},\mathrm{c}$ for all $\tau>\tau_{1}$.
	Eq.~\eqref{eq: energ nec ineq 2} implies that 
	\begin{align}\label{eq: energ nec ineq 3}
		\frac{R^{\mathrm{h}}(\tau)}{R^{\mathrm{c}}(\tau)}\frac{1-\varepsilon_{1}}{1+\varepsilon_{1}}<\frac{\FNh{\tau}}{\FNc{\tau}}<\frac{R^{\mathrm{h}}(\tau)}{R^{\mathrm{c}}(\tau)}\frac{1+\varepsilon_{1}}{1-\varepsilon_{1}}.
	\end{align}
	Similarly, for any $\varepsilon_{2}>0$, there exists $\tau_{2}$ such that
	\begin{align}
		\frac{\rh}{\rc}\frac{1-\varepsilon_{2}}{1+\varepsilon_{2}}<\frac{R^{\mathrm{h}}(\tau)}{R^{\mathrm{c}}(\tau)}<\frac{\rh}{\rc}\frac{1+\varepsilon_{2}}{1-\varepsilon_{2}}
	\end{align}
	for all $\tau > \tau_{2}$.
	Combined, we arrive at the statement that for any $\varepsilon_{3}>0$, there exists $\tau_{3}$ such that
	\begin{align}\label{eq: energ nec ineq 4}
		\frac{\rh}{\rc}\frac{1-\varepsilon_{3}}{1+\varepsilon_{3}}<\frac{\FNh{\tau}}{\FNc{\tau}}<\frac{\rh}{\rc}\frac{1+\varepsilon_{3}}{1-\varepsilon_{3}}
	\end{align}
	Since $\frac{\rh}{\rc}>1$ by assumption, by choosing sufficiently small $\varepsilon_{3}$, we can prove that $\FNh{\tau}>\FNc{\tau}$ for all $\tau>\tau_{3}$.
	
	\textbf{Proof of sufficiency:}
	Now suppose that $\rh<\rc$.
	Using Eq.~\eqref{eq: energ nec ineq 4} and choosing sufficiently small $\varepsilon_{3}$, we prove that there exists $\tau_{3}$ such that $\FNh{\tau}<\FNc{\tau}$ for all $\tau>\tau_{3}$.
\end{proof}
%%%%%%%%%%%%%%%%%%%%%%%%%%%%%%%%%%%%%%%%%%%%%%%%%%%%%%%%%%%%%%%%%%

\subsection*{Proof of Theorem~\ref{thm: finite time sufficient}}

%%%%%%%%%%%%%%%%%%%%%%%%%%%%%%%%%%%%%%%%%%%%%%%%%%%%%%%%%%%%%%%%%%
\begin{proof}
	The RHS of Eq.~\eqref{eq: finite time Mpemba sufficient} depends only on the first two populations from each initial state: $\ph_{0}(0),\ph_{1}(0)$ and $\pc_{0}(0),\pc_{1}(0)$.
	Since the sufficient condition must hold for any two states with those two first populations, we consider the worst-case scenario. 
	Define $\ket*{\psith_{0}}$ as a state with populations
	\begin{align}
		\pth_{0}(0) &= \ph_{0}(0),\quad \pth_{1}(0) = \ph_{1}(0),\\ 
		\pth_{2}(0) &= 1-\ph_{0}(0)-\ph_{1}(0),
	\end{align}
	$\pth_{i} (0)= 0$ for all other levels $i$.
	Then it is straightforward to check that 
	\begin{align}\label{eq: ph lower bound}
		\ph_{0}(\tau) = \frac{\ph_{0}(0)}{\sum_{i}\ph_{i}(0)e^{-2E_{i}\tau}} \geq \pth_{0}(\tau) =  \frac{\pth_{0}(0)}{\sum_{i}\pth_{i}(0)e^{-2E_{i}\tau}}.
	\end{align}
	Hence, the infidelity to the ground state of our actual state $\ket*{\psih(\tau)}$ is never larger than that of $\ket*{\psith(\tau)}$, i.e. $\FIh{\tau}\leq \tilde{D}_{I}^\text{h}(\tau) \coloneqq 1-\pth_{0}(\tau)$ for all $\tau\geq0$.
	The population $\pth_{0}(\tau)$ can also be written explicitly as 
	\begin{align}
		\pth_{0}(\tau) = \frac{\ph_{0}(0)}{\ph_{0}(0)+\ph_{1}(0)e^{-2E_{1}\tau}+(1-\ph_{0}(0)-\ph_{1}(0))e^{-2E_{2}\tau}}.
	\end{align}
	
	Similarly, consider the colder state evolution and define the quantity $\ptc_{0}(\tau)$ as 
	\begin{align}\label{eq: pc upper bound}
		\pc_{0}(\tau) =  \frac{\pc_{0}(0)}{\sum_{i}\pc_{i}(0)e^{-2E_{i}\tau}} \leq \frac{\pc_{0}(0)}{\pc_{0} (0)+ \pc_{1}(0)e^{-2E_{1}\tau}} \eqqcolon \ptc_{0}(\tau).
	\end{align}
	Now we have a lower bound for the colder state infidelity $\FIc{\tau}\geq\tilde{D}_{I}^\text{c}(\tau) \coloneqq 1-\ptc_{0}(\tau)$ for all $\tau\geq0$.
	
	The main strategy of the proof is to compare $\pth_{0}(\tau)$ and $\ptc_{0}(\tau)$. 
	It is more conducive to work with the expressions
	\begin{align}
		\frac{1}{\pth_{0}(\tau)} &= 1+\rh e^{-2E_{1}\tau}+(\sh-\rh) e^{-2E_{2}\tau},\label{eq: pth in rh sh}\\
		\frac{1}{\ptc_{0}(\tau)} &= 1+\rc e^{-2E_{1}\tau},\label{eq: ptc in rc}
	\end{align}
	where $\sh = \frac{1-\ph_{0}(0)}{\ph_{0}(0)}$ is the ratio between the all excited states populations combined and
    the ground state population. 
	Initially, hotter state has a smaller ground state population, which implies that 
	\begin{align}\label{eq: p0 inverses ordering}
		\frac{1}{\pth_{0}(0)} \geq \frac{1}{\ph_{0}(0)} > \frac{1}{\pc_{0}(0)} \geq \frac{1}{\ptc_{0}(0)}.
	\end{align}
	Define the difference $\Delta(\tau) \coloneqq \frac{1}{\pth_{0}(\tau)} - \frac{1}{\ptc_{0}(\tau)}$, with $\Delta(0)>0$ from Eq.~\eqref{eq: p0 inverses ordering}.
	We also assume $\rh<\rc$ for the Mpemba effect to happen at some point and $\Delta(\tau)<0$ for sufficiently large $\tau$ by Theorem~\ref{thm: Mpemba iff}.
	Then we show that there exists a unique solution $\tau = \tilde{\tau}^{\star}$ such that $\pth_{0}(\tau)$ and $\ptc_{0}(\tau)$ crosses, i.e.  
	\begin{align}
		\Delta(\tau) = (\rh - \rc)e^{-2E_{1}\tau} + (\sh-\rh)e^{-2E_{2}\tau} = 0,
	\end{align}
	only when
	\begin{align}\label{eq: tilde tau star def}
		\tau = \frac{1}{2(E_{2}-E_{1})}\log\left(\frac{\sh-\rh}{\rc-\rh}\right) \eqqcolon \tilde{\tau}^{\star}.
\end{align}
	Therefore, for any $\tau>\tilde{\tau}^{\star}$, it is guaranteed that $\Delta(\tau)<0$ and thus
	\begin{align}
		\ph_{0}(\tau) \geq \pth_{0}(\tau) > \ptc_{0}(\tau) \geq \pc_{0}(\tau).
	\end{align}
	In other words, there exists $\tau^{\star}\leq \tilde{\tau}^{\star}$ such that for all $\tau>\tau^{\star}$, $\FIh{\tau}<\FIc{\tau}$ and $\FI^\text{h}(\tau^{\star}) = \FI^\text{c}(\tau^{\star})$. 

	Now to prove that the Mpemba effect occurs before the finite threshold $\epsilon$, we only need to prove that at $\FI^\text{h}(\tau^{\star}) = \FI^\text{c}(\tau^{\star})>\epsilon$.
	Recall that 
	\begin{align}
		\FI^\text{c}(\tau^{\star}) \geq \FI^\text{c}(\tilde{\tau}^{\star}) \geq 1-\ptc_{0}(\tilde{\tau}^{\star}) = 1-\pth_{0}(\tilde{\tau}^{\star}).
	\end{align}
	The first inequality follows from the monotonicity of the infidelity and $\tau^{\star}\leq \tilde{\tau}^{\star}$; the second inequality follows from Eq.~\eqref{eq: pc upper bound}; and the last equality follows from $\Delta(\tau^{\star}) = 0$.
	Thus for the claim $\FI^\text{c}(\tau^{\star})>\epsilon$, it suffices to prove $1-\pth_{0}(\tilde{\tau}^{\star}) = 1-\ptc_{0}(\tilde{\tau}^{\star}) > \epsilon$.
	By substituting Eq.~\eqref{eq: tilde tau star def} into Eq.~\eqref{eq: pc upper bound}, we find that $1-\ptc_{0}(\tilde{\tau}^{\star}) > \epsilon$ is equivalent to Eq.~\eqref{eq: finite time Mpemba sufficient}.

    For the case $\rh = \rc$, suppose that $\frac{\ph_{j}(0)}{\ph_{0}(0)} < \frac{\pc_{j}(0)}{\pc_{0}(0)}$ where $j$ is the smallest index such that $\frac{\ph_{j}(0)}{\ph_{0}(0)} \neq \frac{\pc_{j}(0)}{\pc_{0}(0)}$. By simply replacing $r^{\mathrm{x}}$ with $\sum_{i=1}^{j} \frac{p^\mathrm{x}_{i}(0)}{p^\mathrm{x}_{0}(0)}$ and $E_{1},E_{2}$ with $E_{j},E_{j+1}$ in Eq.~\eqref{eq: tilde tau star def}, we obtain a new crossing time upper bound $\tilde{\tau}^{\star}$ and corresponding $\epsilon<1-\ptc_{0}(\tilde{\tau}^{\star})  $.
\end{proof}
%%%%%%%%%%%%%%%%%%%%%%%%%%%%%%%%%%%%%%%%%%%%%%%%%%%%%%%%%%%%%%%%%%

\subsection*{The derivative of maximum acceleration time}

%%%%%%%%%%%%%%%%%%%%%%%%%%%%%%%%%%%%%%%%%%%%%%%%%%%%%%%%%%%%%%%%%%
Based on the previously established necessary and sufficient conditions for the Mpemba effect, consider an initial state $\ket*{\psi^{\text{c}}_{0}}$ with population distribution $(p^{\text{c}}_0(0),\, p^{\text{c}}_1(0),\, p^{\text{c}}_2(0),\, \dots,\, p^{\text{c}}_{n-1}(0))$. Let $\ket*{\psi^{\text{h},i}_{0}}$ denote an initial state whose population distribution satisfies $p^{\text{h},i}_{j}(0) = 0$ for $0 < j < i$. It can be readily shown that there exists $\tau^*>0$ such that:
\begin{equation}
    \ket*{\psi^{\text{c}}(\tau)} > \ket*{\psi^{\text{h},k}(\tau)} > \ket*{\psi^{\text{h},k+1}(\tau)} > \dots > \ket*{\psi^{\text{h},n-1}(\tau)}
\end{equation}
 for all $\tau > \tau^*$, where $k$ is the index of the lowest nonzero-occupied excited state of $\ket*{\psi^{\text{c}}_{0}}$. Therefore, upon choosing $\ket*{\psi^{\text{h},n-1}_{0}}$ as the hotter initial state satisfying $\FNh{0} > \FNc{0}$, one can determine the relaxation times $\tau^{\text{h}}(\epsilon)$ and $\tau^{\text{c}}(\epsilon)$ for the two evolving states to reach a sufficiently small prescribed threshold $\epsilon$ which satisfies Eq.~(\ref{eq: finite time Mpemba sufficient}). From $D_f^{\text{h}}(\tau^{\text{h}}(\epsilon)) = \epsilon$, it follows that 
 \begin{align}
     &p^{\text{h}}_{n-1}(\tau^{\text{h}}(\epsilon))\fn(E_{n-1}) =\epsilon\\
\Rightarrow & 
\frac{\ph_{n-1}(0)e^{-2E_{n-1}\tau^{\text{h}}(\epsilon)}}{\ph_{0}(0)}\fn(E_{n-1})\approx \epsilon\\
\Rightarrow & 
\tau^{\text{h}}(\epsilon)\approx- \frac{1}{2E_{n-1}}\ln{\frac{\ph_{0}(0)\epsilon}{\ph_{n-1}(0)\fn(E_{n-1})}}.
 \end{align}
 From $D_f^{\text{c}}(\tau^{\text{c}}(\epsilon)) = \epsilon$, it follows that 
 \begin{align}
     &\sum_{i=0}^{n-1} p^{\text{c}}_{i}(\tau^{\text{c}}(\epsilon))\fn(E_{i})=\epsilon\\
\Rightarrow &  p^{\text{c}}_{k}(\tau^{\text{c}}(\epsilon))\fn(E_{k})\approx \epsilon \\
\Rightarrow & 
 \frac{\pc_{k}(0)e^{-2E_{k}\tau^{\text{c}}(\epsilon)}}{\pc_{0}(0)}\fn(E_{k})\approx\epsilon\\
\Rightarrow & 
\tau^{\text{c}}(\epsilon)\approx- \frac{1}{2E_{k}}\ln{\frac{\pc_{0}(0)\epsilon}{\pc_{k}(0)\fn(E_{k})}}.
 \end{align}
As a result, the fastest way for a given initial state to relax to the ground state while exhibiting the Mpemba effect is to set $p^{\text{h}}_1(0)=p^{\text{h}}_2(0)= \dots= p^{\text{h}}_{n-2}(0)=0$ and slightly increase the initial distance $D^{\text{h}}_f(0)$. This configuration yields a maximum acceleration time governed by $\Delta \tau(\epsilon)=\tau^{\text{c}}(\epsilon)-\tau^{\text{h}}(\epsilon)\approx -\frac{E_{n-1}-E_{k}}{2E_{k}E_{n-1}}\ln\epsilon+\frac{1}{2E_{k}}\ln\frac{p_{k}^{\text{c}}(0)f(E_{k})}{p_{0}^{\text{c}}(0)}+\frac{1}{2E_{n-1}}\ln\frac{p_{0}^{\text{h}}(0)}{p_{n-1}^{\text{h}}(0)f(E_{n-1})}$.
%%%%%%%%%%%%%%%%%%%%%%%%%%%%%%%%%%%%%%%%%%%%%%%%%%%%%%%%%%%%%%%%%%

\bibliography{Mpemba.bib}

\end{document}

% --- supplement: Supp.tex ---

\title{Supplemetary Information: Theory of Quantum Imaginary-Time Mpemba Effect}
\author{Yumeng Zeng}
\thanks{These authors contributed equally to this work.}
\affiliation{Center on Frontiers of Computing Studies, School of Computer Science, Peking University, Beijing 100871, China}

\author{Jeongrak Son}
\thanks{These authors contributed equally to this work.}
\affiliation{Nanyang Quantum Hub, School of Physical and Mathematical Sciences, Nanyang Technological University, 637371, Singapore}

\author{Mile Gu}
\email{mgu@quantumcomplexity.org}
\affiliation{Nanyang Quantum Hub, School of Physical and Mathematical Sciences, Nanyang Technological University, 637371, Singapore}
\affiliation{Center for Quantum Technologies, Nanyang Technological University, 639798, Singapore}
\affiliation{MajuLab, CNRS-UNS-NUS-NTU International Joint Research Unit, UMI 3654, Singapore 117543, Singapore}

\author{Xiao Yuan}
\email{xiaoyuan@pku.edu.cn}
\affiliation{Center on Frontiers of Computing Studies, School of Computer Science, Peking University, Beijing 100871, China}
\date{\today}

\maketitle

\section{Degenerate energy levels}\label{app: degeneracy}
In this section, we show that it is safe to assume that all eigenenergies $\{E_{0},E_{1},E_{2},\cdots,E_{n-1}\}$ are nondegenerate. 
Suppose that there exists a degeneracy, i.e. $E_{i} = E_{j}$ for some $i\neq j$.
Any pure state $\ket{\Psi_{0}}$ can be written as $\ket{\Psi_{0}} = c_{i}(0)\ket{E_{i}} + c_{j}(0)\ket{E_{j}} + c^{\perp}(0)\ket*{\psi^{\perp}}$ where $\ket*{\psi^{\perp}}$ is orthogonal to both $\ket{E_{i}}$ and $\ket{E_{j}}$.
Now consider another state 
\begin{align}
	\ket{\tilde{\Psi}_{0}} = \sqrt{|c_{i}(0)|^{2}+|c_{j}(0)|^{2}}\ket{E_{i}} + c^{\perp}(0)\ket{\psi^{\perp}}.
\end{align}
We prove that for all $\tau$, and for any observable that is a function only of Hamiltonian $\hat{f}(H)$, two states $\ket{\Psi(\tau)}$ and $\ket*{\tilde{\Psi}(\tau)}$ yields the same outcome. 
This comes from the simple observation that 
\begin{align}
	\langle E_{i} |\hat{f}(H)| E_{i}\rangle = \langle E_{j} |\hat{f}(H)| E_{j}\rangle,\quad \langle E_{i} |\hat{f}(H)| E_{j}\rangle = 0.
\end{align}
Hence, any degeneracy in the energy level can be effectively removed for any purpose relevant to this work.

\section{A protocol for redistributing the populations of an original initial state}\label{protocol}

To give a protocol for the setup of different initial states,
we consider the one-dimensional XYZ spin chain with periodic boundary conditions:
\begin{equation}
    H = -\sum_{j=1}^{L} \left( \sigma_j^x \sigma_{j+1}^x + \gamma\, \sigma_j^y \sigma_{j+1}^y + \mu\, \sigma_j^z \sigma_{j+1}^z \right),
\end{equation}
where $L$ is the number of lattice sites, $\sigma_j^{\alpha}$ ($\alpha = x, y, z$) are the Pauli operators acting on site $j$. Here the periodic boundary condition $\sigma_{L+1}^{\alpha} \equiv \sigma_1^{\alpha}$ is applied. For $\gamma=1$, the Hamiltonian have $U(1)$-symmetry. We set the tilted ferromagnetic state as the original initial state, which is constructed by applying a uniform $y$-rotation to the fully polarized state $\ket{000\cdots 0}$:
\begin{equation}
    \ket{\psi(\theta)} = e^{-i \frac{\theta}{2} \sum_j \sigma_j^y} \ket{000\cdots 0}.
\end{equation}
For the original initial state (or the initially colder state), $\ket{\Psi^{\mathrm{c}}(0)}=\ket{\psi(\theta)}$ is directly evolved under the Hamiltonian $H_1$ with $\gamma_1=1$. For the initially hotter state, we implement a forward-backward pre-evolution to obtain a state with a different population distribution from the original initial state. That is, the original initial state $\ket{\psi(\theta)}$ first undergoes a forward imaginary time evolution under the Hamiltonian $H_0$ with $\gamma_0\neq1$ for a duration of $\tau$, followed by a backward imaginary time evolution under the  Hamiltonian with $\gamma=1$ for the same duration. The resulting state then serves as the initially hotter state:
\begin{equation}
    \ket{\Psi^{\mathrm{h}}(0)} = \frac{e^{+H_1\,\tau_{\mathrm{pre}}/2}\, e^{-H_0\,\tau_{\mathrm{pre}}/2}\, \ket{\psi(\theta)}}{\| e^{+H_1\,\tau_{\mathrm{pre}}/2}\, e^{-H_0\,\tau_{\mathrm{pre}}/2}\, \ket{\psi(\theta)} \|},
\end{equation}
where $\tau_{\mathrm{pre}}$ is the pre-evolution duration. The state $ \ket{\Psi^{\mathrm{h}}(0)}$ is then evolved under $H_1$ from $\tau=0$. As shown in Fig.~\ref{Example}(a), the ratio $p_1/p_0$ at the end of the pre-evolution duration is lower than the ratio before the pre-evolution. We plot the evolving curves of modified average energy of the two initial states in Fig.~\ref{Example}(b), where $E_0$ is the ground state energy of $H_1$ obtained by exact diagonalization. Evidently, the initially hotter state prepared via pre-evolution, relaxes to the ground state faster than the original initial state.

\begin{figure*}
\begin{centering}
\includegraphics[scale=0.44]{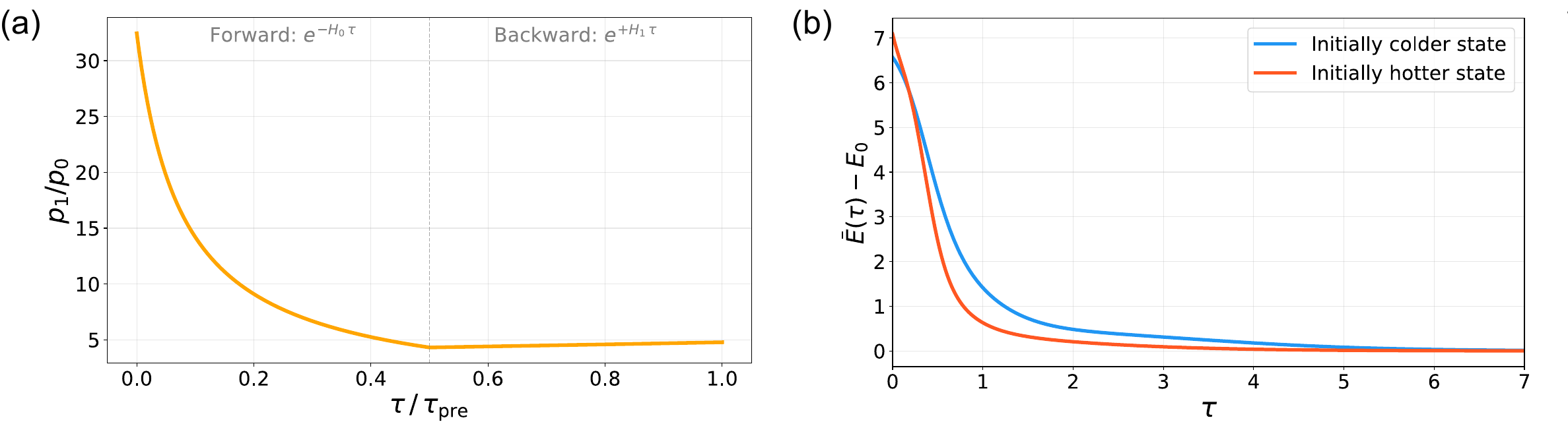}
\par\end{centering}
\caption{\textbf{(a):} The change of the ratio $p_1/p_0$ during the pre-evolution duration.
\textbf{(b):} The images of evolving curves of average energy. Here $L = 8$, $\mu = 0.3$, $\theta = 0.1\pi$, $\gamma_0=0.01$, $\gamma_1=1$, $\tau_{\mathrm{pre}}=0.2$.
\label{Example}}
\end{figure*}

\section{The proof of colinearity preservation}\label{colinearity}

\begin{theorem}
Assume that the initial \textcolor{black}{population} distributions of three states, denoted as $\vec{p}_A(0)$, $\vec{p}_B(0)$, and $\vec{p}_C(0)$ in an $n$-dimensional space, are collinear. Under imaginary time evolution governed by a given Hamiltonian $H$, their \textcolor{black}{population} distributions $\vec{p}_A(\tau)$, $\vec{p}_B(\tau)$, and $\vec{p}_C(\tau)$ remain collinear at any time $\tau$.
\end{theorem}

\begin{proof}
Let the initial population distributions be collinear. By definition, the ratio of the differences between their corresponding components $i \in \{0, 1, \dots, n-1\}$ is a constant $\lambda$ independent of $i$:
\begin{equation}
    \frac{p_{C,i}(0) - p_{A,i}(0)}{p_{B,i}(0) - p_{A,i}(0)} = \lambda
\end{equation}
This can be rewritten as:
\begin{equation}
    p_{C,i}(0) = \lambda p_{B,i}(0) + (1-\lambda) p_{A,i}(0)
\end{equation}

The normalized population distribution for state $k \in \{A, B, C\}$ at time $\tau$ is given by:
\begin{equation}
    p_{k,i}(\tau) = \frac{p_{k,i}(0) W_i(\tau)}{Z_k(\tau)}
\end{equation}
where $W_i(\tau) = e^{-2E_i \tau}$ and $Z_k(\tau) = \sum_{j} p_{k,j}(0) W_j(\tau)$. 

Using the linear relation at $\tau=0$, $Z_C(\tau)$ can be expressed as:
\begin{align}
    Z_C(\tau) &= \sum_{j} \left[ \lambda p_{B,j}(0) + (1-\lambda) p_{A,j}(0) \right] W_j(\tau) \\&= \lambda Z_B(\tau) + (1-\lambda) Z_A(\tau)\label{eq:co}
\end{align}

Now, we evaluate the normalized component $p_{C,i}(\tau)$:
\begin{align}
    p_{C,i}(\tau) &= \frac{\lambda p_{B,i}(0) W_i(\tau) + (1-\lambda) p_{A,i}(0) W_i(\tau)}{Z_C(\tau)} \nonumber \\
    &= \frac{\lambda Z_B(\tau) p_{B,i}(\tau) + (1-\lambda) Z_A(\tau) p_{A,i}(\tau)}{Z_C(\tau)}
\end{align}

Subtracting $p_{A,i}(\tau)$ from both sides yields the difference:
% \begin{widetext}
\begin{equation}
    p_{C,i}(\tau) - p_{A,i}(\tau) = \frac{\lambda Z_B(\tau) p_{B,i}(\tau) + (1-\lambda) Z_A(\tau) p_{A,i}(\tau) - Z_C(\tau) p_{A,i}(\tau)}{Z_C(\tau)}
\end{equation}
% \end{widetext}
Substituting Eq. (\ref{eq:co}) into the last term of the numerator, the terms containing $Z_A(\tau)$ cancel out perfectly:

\begin{align}
    p_{C,i}(\tau) - p_{A,i}(\tau) &= \frac{\lambda Z_B(\tau) p_{B,i}(\tau) - \lambda Z_B(\tau) p_{A,i}(\tau)}{Z_C(\tau)} \nonumber \\
    &= \frac{\lambda Z_B(\tau)}{Z_C(\tau)} \left[ p_{B,i}(\tau) - p_{A,i}(\tau) \right]
\end{align}

Dividing both sides by $p_{B,i}(\tau) - p_{A,i}(\tau)$, we obtain the ratio of the differences at time $\tau$:
\begin{equation}
    \frac{p_{C,i}(\tau) - p_{A,i}(\tau)}{p_{B,i}(\tau) - p_{A,i}(\tau)} = \frac{\lambda Z_B(\tau)}{Z_C(\tau)} \equiv \lambda'(\tau)
\end{equation}

Since the new ratio $\lambda'(\tau)$ depends only on the initial constant $\lambda$ and the global partition functions at time $\tau$, it is strictly independent of the index $i$. Therefore, the population distributions $\vec{p}_A(\tau)$, $\vec{p}_B(\tau)$, and $\vec{p}_C(\tau)$ remain collinear at any time $\tau > 0$.
\end{proof}

Therefore, if all initial states lie on a straight line, their population distributions will remain collinear at any $\tau$ during the evolution. Consequently, the shape of the isochronous lines for these states remains invariant (maintaining a straight line). Since the rate of change of $p_i(\tau)$ for an evolving state is dictated by $\frac{dp_{i}(\tau)}{d\tau} = -2 (E_i - \bar{E}(\tau)) p_{i}(\tau)$, the population distribution of each state changes in a different direction at the same imaginary time, causing the isochronous line to rotate within the population simplex over time. Furthermore, since all isoenergy lines are parallel to each other, there may be a specific time $\tau^*$ during this rotation when the isochronous line becomes parallel to the isoenergy lines and coincides with one of them.

\section{A sufficient condition for finite-time Mpemba effects for a general distance function}\label{general}

\begin{theorem}\label{thm: sufficient condition general F}
	For any measure $\FN$ defined as in the main text, the Mpemba effect occurs before the threshold $\epsilon$ if 
	\begin{align}
		\frac{\epsilon}{\fn(E_{1})} < \min \left \{\frac{a}{(1-\frac{f(E_{1})}{f(E_{n-1})})(1+a)}, \quad \left(1+\frac{1}{\rc} \left(\frac{\sh-\rh}{\frac{\fn(E_{1})}{(1+a)\fn(E_{n-1})}\rc - \rh}\right)^{\frac{E_{1}}{E_{2}-E_{1}}} \right)^{-1} \right \},
	\end{align}
	for any $a\in(0, \frac{\fn(E_{1})\rc}{\fn(E_{n-1})\rh} - 1)$, $\frac{\rc}{\rh} > \frac{\fn(E_{n-1})}{\fn(E_{1})}$, and $\fn(E_{n-1})>\fn(E_{1})$.

\end{theorem}

\begin{proof}
	Let us write the measures $\FN^{\mathrm{h}}$ and $\FN^{\mathrm{c}}$ explicitly, which read
	\begin{align}
		\FNh{\tau} &= \frac{\sum_{i=0}^{n-1}\fn(E_{i})\ph_{i}(0)e^{-2E_{i}\tau}}{\sum_{i=0}^{n-1}\ph_{i}(0)e^{-2E_{i}\tau}},\\
		\FNc{\tau} &= \frac{\sum_{i=0}^{n-1}\fn(E_{i})\pc_{i}(0)e^{-2E_{i}\tau}}{\sum_{i=0}^{n-1}\pc_{i}(0)e^{-2E_{i}\tau}}.
	\end{align}
	As we did in the proof of Theorem 2, we may define measures $\tilde{D}_f^{\mathrm{h}}$ and $\tilde{D}_f^{\mathrm{c}}$ such that 
	\begin{align}
		\tilde{D}_f^{\mathrm{h}}(\tau) \geq \FNh{\tau},\quad \tilde{D}_f^{\mathrm{c}}(\tau) \leq \FNc{\tau},\quad \forall \tau\geq0.
	\end{align}
	Then $\tilde{D}_f^{\mathrm{h}}(\tau)<\tilde{D}_f^{\mathrm{c}}(\tau)$ becomes the sufficient condition for $\FNh{\tau}<\FNc{\tau}$.
	Observe that 
	\begin{align}
		\fn(E_{n-1})(1-\ph_{0}(\tau)) &= \frac{\sum_{i=1}^{n-1}\fn(E_{n-1})\ph_{i}e^{-2E_{i}\tau}}{\sum_{i=0}^{n-1}\ph_{i}e^{-2E_{i}\tau}} \geq \FNh{\tau},\\
		\fn(E_{1})(1-\pc_{0}(\tau)) &= \frac{\sum_{i=1}^{n-1}\fn(E_{1})\pc_{i}e^{-2E_{i}\tau}}{\sum_{i=0}^{n-1}\pc_{i}e^{-2E_{i}\tau}} \leq \FNc{\tau}.
	\end{align}
	Furthermore, we have already defined $\pth_{0}(\tau)$ and $\ptc_{0}(\tau)$ in Eqs.~(17) and~(19) of main text so that 
	\begin{align}
		1-\pth_{0}(\tau) \geq 1-\ph_{0}(\tau),\quad 1-\ptc_{0}(\tau) \leq 1-\pc_{0}(\tau),
	\end{align}
	for all $\tau\geq0$.
	Hence, we choose 
	\begin{align}
		\tilde{D}_f^{\mathrm{h}}(\tau) &\coloneqq \fn(E_{n-1})(1-\pth_{0}(\tau)),\label{eq: Fth general def}\\
		\tilde{D}_f^{\mathrm{c}}(\tau) &\coloneqq \fn(E_{1})(1-\ptc_{0}(\tau)),\label{eq: Ftc general def}
	\end{align}
	where we can write
	\begin{align}
		1-\pth_{0}(\tau) &= \frac{\rh e^{-2E_{1}\tau} + (\sh-\rh)e^{-2E_{2}\tau}}{1 + \rh e^{-2E_{1}\tau} + (\sh-\rh)e^{-2E_{2}\tau}},\\
		1-\ptc_{0}(\tau) &= \frac{\rc e^{-2E_{1}\tau}}{1 + \rc e^{-2E_{1}\tau}},
	\end{align}
	using Eqs.~(20) and~(21) of main text.
	
		Let $\tilde{\tau}^{\star}$ be a positive number such that  $\tilde{D}_f^{\mathrm{h}}(\tau) <\tilde{D}_f^{\mathrm{c}}(\tau)$ for all $\tau>\tilde{\tau}^{\star}$.
		This is equivalent to
		\begin{align}
			\frac{\rc e^{-2E_{1}\tau}}{1+\rc e^{-2E_{1}\tau}} >\frac{\fn(E_{n-1})}{\fn(E_{1})} \frac{\rh e^{-2E_{1}\tau} + (\sh-\rh)e^{-2E_{2}\tau}}{1+\rh e^{-2E_{1}\tau} + (\sh-\rh)e^{-2E_{2}\tau}},
		\end{align}
		for all $\tau>\tilde{\tau}^{\star}$, or, after some algebra
		\begin{align}\label{eq: Fth Fch with ratio}
			\frac{\fn(E_{n-1})}{\fn(E_{1})} \left( \rh e^{-2E_{1}\tau} + (\sh-\rh)e^{-2E_{2}\tau} \right)< \frac{\rc e^{-2E_{1}\tau}}{1+(1-\frac{f(E_{1})}{f(E_{n-1})})\rc e^{-2E_{1}\tau}}. 
		\end{align}
		
		Now we can write a sufficient condition for Eq.~\eqref{eq: Fth Fch with ratio}:
		For any $a>0$ and $\fn(E_{n-1})>\fn(E_{1})$, if there exists $\tau^{\circledast}_{0}\geq\tilde{\tau}^{\star}$ that 
		\begin{align}\label{eq: Fth Fch with ratio 3}
			\left (1-\frac{f(E_{1})}{f(E_{n-1})}\right )\rc e^{-2E_{1}\tau} < a,
		\end{align}
		and 
		\begin{align}\label{eq: Fth Fch with ratio 2}
			\frac{(1+a)\fn(E_{n-1})}{\fn(E_{1})}\left(\rh + (\sh-\rh)e^{-2(E_{2}-E_{1})\tau}\right) < \rc,
		\end{align}
for all $\tau>\tau^{\circledast}_{0}$, then the inequality in Eq.~\eqref{eq: Fth Fch with ratio} is satisfied. 
		Eq.~\eqref{eq: Fth Fch with ratio 3} is equivalent to $\tau > \tau^{\circledast}_{1}$ with
		\begin{align}\label{eq: tau circledast 1}
			\tau^{\circledast}_{1} = \frac{1}{2E_{1}} \left [\ln\left (\frac{\rc}{a}\right ) + \ln\left( 1-\frac{f(E_{1})}{f(E_{n-1})}\right) \right ],
		\end{align}
		whereas Eq.~\eqref{eq: Fth Fch with ratio 2} is equivalent to $\tau > \tau^{\circledast}_{2}$ with
		\begin{align}\label{eq: tau circledast 2}
			\tau^{\circledast}_{2} = \frac{1}{2(E_{2}-E_{1})}\ln\left(\frac{\sh - \rh}{\frac{\fn(E_{1})}{(1+a)\fn(E_{n-1})}\rc - \rh}\right) .
		\end{align}
		For $\tau^{\circledast}_{2}$ to be a real number, we must have $\frac{\fn(E_{1})}{(1+a)\fn(E_{n-1})}\rc - \rh>0$, i.e. $a<\frac{\fn(E_{1})\rc}{\fn(E_{n-1})\rh} - 1$.
		Since $a>0$, our sufficient condition is only capable of detecting the Mpemba effect when $\frac{\fn(E_{1})\rc}{\fn(E_{n-1})\rh} - 1>0$, i.e. $\frac{\rc}{\rh} > \frac{\fn(E_{n-1})}{\fn(E_{1})}$.
		
		Next we prove that $\tau^{\circledast}_{0}=\max\{\tau^{\circledast}_{1}, \tau^{\circledast}_{2}\}\geq\tilde{\tau}^{\star}$. It is evident that $\tau^{\circledast}_{1}$ is a monotonically decreasing function of $a$, and $\tau^{\circledast}_{1}\to \infty$ when $a\to 0$, while $\tau^{\circledast}_{2}$ is a monotonically increasing function of $a$, and $\tau^{\circledast}_{2}\to \infty$ when $a\to \frac{\fn(E_{1})\rc}{\fn(E_{n-1})\rh} - 1$. As a result, there must exsit $a^*\in(0, \frac{\fn(E_{1})\rc}{\fn(E_{n-1})\rh} - 1)$ that satisfies $\tau^{\circledast}_{1}(a^*)=\tau^{\circledast}_{2}(a^*)$, and $\tau^{\circledast}_{0}(a^*)$ is the minimum of $\tau^{\circledast}_{0}(a)$.
From Eqs.~\eqref{eq: tau circledast 1} and \eqref{eq: tau circledast 2}, we can get 
       \begin{equation}
         \frac{\fn(E_{n-1})}{\fn(E_{1})} \left( \rh e^{-2E_{1}\tau^{\circledast}_{0}(a^*)} + (\sh-\rh)e^{-2E_{2}\tau^{\circledast}_{0}(a^*)} \right)= \frac{\rc e^{-2E_{1}\tau^{\circledast}_{0}(a^*)}}{1+(1-\frac{f(E_{1})}{f(E_{n-1})})\rc e^{-2E_{1}\tau}}, 
       \end{equation}
        which shows that $\tau^{\circledast}_{0}(a^*)$ is exactly $\tilde{\tau}^{\star}$. Therefore, we have proven that $\tau^{\circledast}_{0}(a)\geq\tau^{\circledast}_{0}(a^*)=\tilde{\tau}^{\star}$.

	Finally, our sufficient condition for finite-time Mpemba effects can be written as $\epsilon < \tilde{D}_f^{\mathrm{c}}(\tau^{\circledast}_{0})=\tilde{D}_f^{\mathrm{c}}(\max\{\tau^{\circledast}_{1},\tau^{\circledast}_{2}\})$.
	Since $\tilde{D}_f^{\mathrm{c}}$ is a monotonic function, this is equivalent to having $\epsilon < \min\{\tilde{D}_f^{\mathrm{c}}(\tau^{\circledast}_{1}), \tilde{D}_f^{\mathrm{c}}(\tau^{\circledast}_{2})\}$.
	Using Eqs.~\eqref{eq: Ftc general def}, \eqref{eq: tau circledast 1}, and \eqref{eq: tau circledast 2}, we obtain 
	\begin{align}
		\frac{\epsilon}{\fn(E_{1})} < \min \left \{\frac{a}{(1-\frac{f(E_{1})}{f(E_{n-1})})(1+a)}, \quad \left(1+\frac{1}{\rc} \left(\frac{\sh-\rh}{\frac{\fn(E_{1})}{(1+a)\fn(E_{n-1})}\rc - \rh}\right)^{\frac{E_{1}}{E_{2}-E_{1}}} \right)^{-1} \right \},
	\end{align}
    for any $a\in(0, \frac{\fn(E_{1})\rc}{\fn(E_{n-1})\rh} - 1)$, $\frac{\rc}{\rh} > \frac{\fn(E_{n-1})}{\fn(E_{1})}$, and $\fn(E_{n-1})>\fn(E_{1})$. Notably, for $a=a^*$, we get the optimal sufficient condition $\epsilon<\frac{a^*\fn(E_{1})}{(1-\frac{f(E_{1})}{f(E_{n-1})}((1+a^*)}$.
\end{proof}

\section{Proof of Theorem 1 when the first excited state population is zero}\label{app: first excited zero}

To prove the necessity, we first suppose that $\rh > \rc$. 
Straightforwardly, if $\rc=0$, we just need to change the part for $\mathrm{x}=\mathrm{c}$ as:
\begin{equation}
		\FNc{\tau} 
		= \frac{r_k^{\mathrm{c}}\fn(E_{k})e^{-2E_{k}\tau}+\eta^{\mathrm{c}}(\tau)e^{-2E_{k+1}\tau}}{1+r_k^{\mathrm{c}}e^{-2E_{k}\tau} + o^{\mathrm{c}}(\tau)e^{-2E_{k+1}\tau}},
	\end{equation}
	where
	\begin{align}
		o^{\mathrm{c}}(\tau) &=  \sum_{i=k+1}^{n-1}\frac{p^{\mathrm{c}}_{i}(0)}{p^{\mathrm{c}}_{0}(0)}e^{-2(E_{i}-E_{k+1})\tau},\\
		\eta^{\mathrm{c}}(\tau) &= \sum_{i=k+1}^{n-1}\frac{p^{\mathrm{c}}_{i}(0)\fn(E_{i})e^{-2(E_{i}-E_{k+1})\tau}}{p^{\mathrm{c}}_{0}(0)}.
	\end{align} 
	Here $\rc_k = \frac{\pc_{k}(0)}{\pc_{0}(0)}$, and $k$ is the smallest index of excited state with nonzero population.
    
    Now we rewrite 
	\begin{align}
		\FNc{\tau} = R^{\mathrm{c}}(\tau)e^{-2E_{k}\tau} + \mu^{\mathrm{c}}(\tau)e^{-2E_{k}\tau},
	\end{align}
	where
	\begin{align}
		R^{\mathrm{c}}(\tau) &= \frac{r_k^{\mathrm{c}}\fn(E_{k})}{1+r_k^{\mathrm{c}}e^{-2E_{k}\tau}},\\
		\mu^{\mathrm{c}}(\tau) &= \frac{\eta^{\mathrm{c}}(\tau) - R^{\mathrm{c}}(\tau)o^{\mathrm{c}}(\tau)e^{-2E_{k}\tau}}{1+r^{\mathrm{c}}e^{-2E_{k}\tau} + o^{\mathrm{c}}(\tau)e^{-2E_{k+1}\tau}}.
	\end{align}
	Then, for any $\varepsilon_{1}>0$, there exists $\tau_{1}$ such that 
	\begin{align}
		R^{\mathrm{c}}(\tau)(1- \varepsilon_{1}) < \FNc{\tau} e^{2E_{k}\tau} < R^{\mathrm{c}}(\tau) (1+ \varepsilon_{1}),
	\end{align}
	for all $\tau>\tau_{1}$, implies that 
	\begin{align}
		\frac{R_1^{\mathrm{h}}(\tau)}{R_k^{\mathrm{c}}(\tau)}\frac{1-\varepsilon_{1}}{1+\varepsilon_{1}}<\frac{\FNh{\tau}e^{2E_{1}\tau}}{\FNc{\tau}e^{2E_{k}\tau}}<\frac{R_1^{\mathrm{h}}(\tau)}{R_k^{\mathrm{c}}(\tau)}\frac{1+\varepsilon_{1}}{1-\varepsilon_{1}}.
	\end{align}
	Similarly, for any $\varepsilon_{2}>0$, there exists $\tau_{2}$ such that
	\begin{align}
		\frac{\rh_1 \fn(E_1)}{\rc_k\fn(E_2)}\frac{1-\varepsilon_{2}}{1+\varepsilon_{2}}<\frac{R^{\mathrm{h}}(\tau)}{R^{\mathrm{c}}(\tau)}<\frac{\rh_1\fn(E_1)}{\rc_k\fn(E_k)}\frac{1+\varepsilon_{2}}{1-\varepsilon_{2}}
	\end{align}
	for all $\tau > \tau_{2}$.
	Combined, we arrive at the statement that for any $\varepsilon_{3}>0$, there exists $\tau_{3}$ such that
	\begin{align}
		\frac{\rh_1 \fn(E_1)}{\rc_k\fn(E_2)}\frac{1-\varepsilon_{3}}{1+\varepsilon_{3}}<\frac{\FNh{\tau}e^{2E_{1}\tau}}{\FNc{\tau}e^{2E_{k}\tau}}<\frac{\rh_1 \fn(E_1)}{\rc_k\fn(E_2)}\frac{1+\varepsilon_{3}}{1-\varepsilon_{3}}.
	\end{align}
there exists $\tau_{4}>\tau_{3}$ that for all $\tau>\tau_{4}$
 \begin{align}
		e^{2(E_{k}-E_1)\tau}\frac{\rh_1 \fn(E_1)}{\rc_k\fn(E_2)}\frac{1-\varepsilon_{3}}{1+\varepsilon_{3}}>1
	\end{align}
then we get
    \begin{align}
		1<e^{2(E_{k}-E_1)\tau}\frac{\rh_1 \fn(E_1)}{\rc_k\fn(E_2)}\frac{1-\varepsilon_{3}}{1+\varepsilon_{3}}<\frac{\FNh{\tau}}{\FNc{\tau}}
	\end{align}
	we can prove that $\FNh{\tau}>\FNc{\tau}$ for all $\tau>\tau_{4}$. 
    The proof of sufficiency is similar.

\section{The crossing time in the long-time regime}
In practical computation, we stop the evolution once the distance function of an evolution state falls below a certain threshold (i.e., $D_f(\tau)\le\epsilon$), regarding it as having reached the ground state. Therefore, if the time at which the Mpemba effect occurs is longer than the evolution time, we will not be able to observe the Mpemba effect. Thus, we need to know the time when the Mpemba effect takes place, namely the crossing time $\tau^*$ of the distance function of the two evolution states. When the crossing time is evidently shorter than the stopping time, i.e., Mpemba effects occur in the short‑ or intermediate-time regime, we do not need to worry about this issue. What we truly need to focus on are cases where the crossing time is comparable to or much larger than the stopping time, i.e., Mpemba effects occur in the long-time regime. Take $D_f(\tau)=\bar{E}(\tau)$ as an example, since after the stopping time $\tau_s$ the average energy of the evolution state has already fallen below the threshold, it implies that average energies of both evolution states satisfy $\bar{E}(\tau^*)\approx0$ around the crossing time in such cases. Thus the crossing time can be derived by following steps.

\textbf{Case I: $k^{\text{h}}_1=k^{\text{c}}_1=k_1$, and $\frac{p^{\text{h}}_{k_1}(0)}{p^{\text{h}}_{0}(0)}<\frac{p^{\text{c}}_{k_1}(0)}{p^{\text{c}}_{0}(0)}$ (If $p^{\text{h}}_{k_1}=p^{\text{c}}_{k_1}$, then compare higher energy levels with nonzero occupancy numbers).} For this case, the average energy at the crossing time $\tau^*$ can be approximated as 
\begin{align}
    \bar{E}^{\text{h}}(\tau^*)&\approx p^{\text{h}}_{k_1}(\tau^*)E_{k_1}+p^{\text{h}}_{k^{\text{h}}_2}(\tau^*)E_{k^{\text{h}}_2},\label{eq:c1} \\ 
     \bar{E}^{\text{c}}(\tau^*)&\approx p^{\text{c}}_{k_1}(\tau^*)E_{k_1},\label{eq:c2}
\end{align}
here $k^{\text{h/c}}_{1}$ and $k^{\text{h}}_{2}$ are the lowest and second lowest excited energy levels with nonzero occupation in the corresponding initial distribution. The reason that Eq. (\ref{eq:c1}) contains one more term than Eq. (\ref{eq:c2}) is that, when the Mpemba effect exists, the first term of Eq. (\ref{eq:c1}) is necessarily smaller than that of Eq. (\ref{eq:c2}). Therefore, the additional term in Eq. (\ref{eq:c1}) is required to make the two equations equal.

Then we get
\begin{align}
\bar{E}^{\text{h}}(\tau^*)=
\bar{E}^{\text{c}}(\tau^*)
\Rightarrow &\frac{\frac{p^{\text{h}}_{k_1}(0)}{p^{\text{h}}_{0}(0)}e^{-2E_{k_1}\tau^*}E_{k_1}+\frac{p^{\text{h}}_{k_2^{\text{h}}}(0)}{p^{\text{h}}_{0}(0)}e^{-2E_{k_2^{\text{h}}}\tau^*}E_{k_2^{\text{h}}}}{1+\sum_{i>0}\frac{p^{\text{h}}_{i}(0)}{p^{\text{h}}_{0}(0)}e^{-2E_i\tau^*}}\approx\frac{\frac{p^{\text{c}}_{k_1}(0)}{p^{\text{c}}_{0}(0)}e^{-2E_{k_1}\tau^*}E_{k_1}}{1+\sum_{i>0}\frac{p^{\text{c}}_i(0)}{p^{\text{c}}_{0}(0)}e^{-2E_i\tau^*}}\\
\Rightarrow &\frac{p^{\text{h}}_{k_1}(0)}{p^{\text{h}}_{0}(0)}e^{-2E_{k_1}\tau^*}E_{k_1}+\frac{p^{\text{h}}_{k_2^{\text{h}}}(0)}{p^{\text{h}}_{0}(0)}e^{-2E_{k_2^{\text{h}}}\tau^*}E_{k_2^{\text{h}}}\approx\frac{p^{\text{c}}_{k_1}(0)}{p^{\text{c}}_{0}(0)}e^{-2E_{k_1}\tau^*}E_{k_1}\\
\Rightarrow &
\tau^*\approx\frac{1}{2(E_{k^{\text{h}}_2}-E_{k_1})}\ln{\frac{\frac{p^{\text{h}}_{k^{\text{h}}_2}(0)}{p^{\text{h}}_{0}(0)}E_{k^{\text{h}}_2}}{(\frac{p^{\text{c}}_{k_1}(0)}{p^{\text{c}}_{0}(0)}-\frac{p^{\text{h}}_{k_1}(0)}{p^{\text{h}}_{0}(0)})E_{k_1}}}.\label{eq:tau1}
\end{align}
So if $\tau^*$ of Eq. (\ref{eq:tau1}) is shorter than stopping time, then we can observe the Mpemba effect, and vice versa. Note that $\frac{p^{\text{h}}_{k^{\text{h}}_2}(0)}{p^{\text{h}}_{0}(0)}E_{k^{\text{h}}_2}$ should be greater than $(\frac{p^{\text{c}}_{k_1}(0)}{p^{\text{c}}_{0}(0)}-\frac{p^{\text{h}}_{k_1}(0)}{p^{\text{h}}_{0}(0)})E_{k_1}$, which is usually sufficient. Otherwise, the crossing time would be much shorter than the stop time.

\textbf{Case II: $k^{\text{h}}_1\ne k^{\text{c}}_1$.} For this case, the average energy at the crossing time $\tau^*$ can be approximated as 
\begin{align}
    \bar{E}^{\text{h}}(\tau^*)&\approx p^{\text{h}}_{k^{\text{h}}_1}(\tau^*)E_{k^{\text{h}}_1},\label{eq:c9} \\ 
     \bar{E}^{\text{c}}(\tau^*)&\approx p^{\text{c}}_{k^{\text{c}}_1}(\tau^*)E_{k^{\text{c}}_1},\label{eq:c10}
\end{align}
In this situation, Eq. (\ref{eq:c9}) contains only one term because that the occurrence of the Mpemba effect requires $k^{\text{h}}_1> k^{\text{c}}_1$, and the first term of Eq. (\ref{eq:c9}) can be equal to that of Eq. (\ref{eq:c10}). Therefore, we get
\begin{align}
\bar{E}^{\text{h}}(\tau^*)=
\bar{E}^{\text{c}}(\tau^*)
\Rightarrow &\frac{\frac{p^{\text{h}}_{k_1^{\text{h}}}(0)}{p^{\text{h}}_{0}(0)}e^{-2E_{k_1^{\text{h}}}\tau^*}E_{k_1^{\text{h}}}}{1+\sum_{i>0}\frac{p^{\text{h}}_{i}(0)}{p^{\text{h}}_{0}(0)}e^{-2E_i\tau^*}}\approx\frac{\frac{p^{\text{c}}_{k_1^{\text{c}}}(0)}{p^{\text{c}}_{0}(0)}e^{-2E_{k_1^{\text{c}}}\tau^*}E_{k_1^{\text{c}}}}{1+\sum_{i>0}\frac{p^{\text{c}}_i(0)}{p^{\text{c}}_{0}(0)}e^{-2E_i\tau^*}}\\
\Rightarrow &\frac{p^{\text{h}}_{k_1^{\text{h}}}(0)}{p^{\text{h}}_{0}(0)}e^{-2E_{k_1^{\text{h}}}\tau^*}E_{k_1^{\text{h}}}\approx\frac{p^{\text{c}}_{k_1^{\text{c}}}(0)}{p^{\text{c}}_{0}(0)}e^{-2E_{k_1^{\text{c}}}\tau^*}E_{k_1^{\text{c}}}\\
\Rightarrow &
\tau^*\approx\frac{1}{2(E_{k^{\text{h}}_1}-E_{k^{\text{c}}_1})}\ln{\frac{p^{\text{h}}_{k^{\text{h}}_1}(0)p^{\text{c}}_{0}(0)E_{k^{\text{h}}_1}}{p^{\text{c}}_{k^{\text{c}}_1}(0)p^{\text{h}}_{0}(0)E_{k^{\text{c}}_1}}}.\label{eq:tau2}
\end{align}
So if $\tau^*$ of Eq. (\ref{eq:tau2}) is shorter than stopping time, then we can observe the Mpemba effect, and vice versa. Note that $\frac{p^{\text{h}}_{k^{\text{h}}_1}(0)}{p^{\text{h}}_{0}(0)}E_{k^{\text{h}}_1}$ should be greater than $\frac{p^{\text{c}}_{k^{\text{c}}_1}(0)}{p^{\text{c}}_{0}(0)}E_{k^{\text{c}}_1}$, which is usually sufficient. Otherwise, the crossing time would be much shorter than the stop time. 

The numercal results in Fig.~\ref{Appendix} shows that our estimated crossing time (gray dotted lines) of Eq. (\ref{eq:tau1}) and Eq. (\ref{eq:tau2}) are very close to the actual crossing time (purple dashed lines).

\begin{figure*}
\begin{centering}
\includegraphics[scale=0.55]{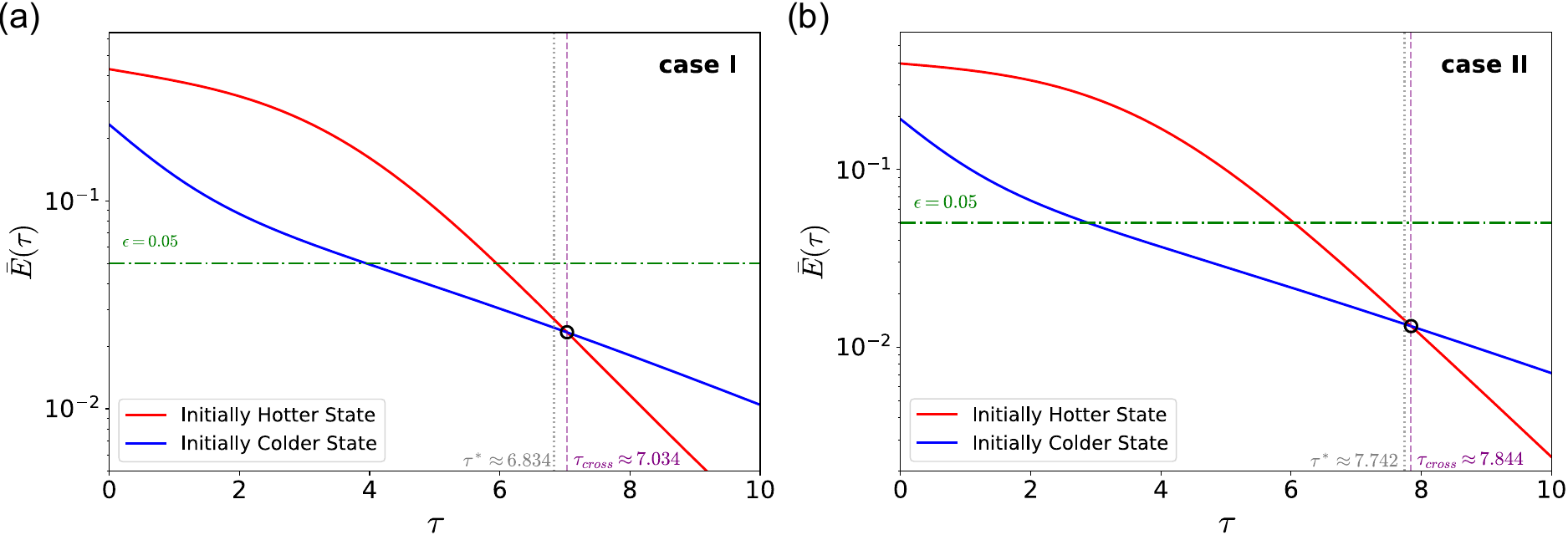}
\par\end{centering}
\caption{The images of evolving curves of average energy. \textbf{(a):} Initially hotter state: $(0.05, 0.005, 0.8, 0.05, 0.095)$. Initially colder state: $(0.3, 0.45, 0.05, 0.1, 0.1)$. The black circle denotes the crossing time $\tau \approx7.034$. The gray dotted line indicate the estimated crossing time of Eq. (\ref{eq:tau1}).
\textbf{(b):} Initially hotter state: $(0.05, 0.0, 0.9, 0.025, 0.025)$. Initially colder state: $(0.4, 0.4, 0.05, 0.05, 0.1)$. The black circle denotes the crossing time $\tau \approx7.844$. The gray dotted line indicate the estimated crossing time of Eq. (\ref{eq:tau2}).
 Here $E_0=0, E_1=0.15, E_2=0.4, E_3=0.65, E_4=0.8$.
\label{Appendix}}
\end{figure*}